\documentclass[11pt,a4paper]{article}
\usepackage[latin1]{inputenc}
\usepackage{authblk}
\usepackage{amsfonts}
\usepackage{amscd}
\usepackage{amsmath}
\usepackage{theorem}
\usepackage{mathrsfs}
\usepackage{fancybox,amssymb}
\usepackage{geometry}
\usepackage[english]{babel}
\usepackage{graphicx}
\usepackage{epstopdf}

\newcommand{\mF}{\mathcal{F}}

\newcommand{\R}{\mathbb{R}}

\newcommand{\mP}{\mathbb{P}}
\newcommand{\mE}{\mathbb{E}}
\newcommand{\md}{\,{\rm d}}

\newcommand{\s}{\sum\limits}

\newcommand{\w}{\wedge}

\newcommand{\one}{1\mkern-5mu{\hbox{\rm I}}}

\theoremstyle{break}
\theorembodyfont{}
\newtheorem{Def}{Definition}[section]
\newtheorem{Bem}[Def]{Remark}

\newtheorem{Lem}[Def]{Lemma}
\newtheorem{Satz}[Def]{Proposition}

\newtheorem{Thm}[Def]{Theorem}
\newtheorem{Bsp}[Def]{Example}

\makeatletter
\newenvironment{proof}{\noindent{\textit{Proof:}}}{%
\unskip\nobreak\hfil\penalty50\hskip1em\null\nobreak
$\Box$
\parfillskip=\z@\finalhyphendemerits=0\endgraf\bigskip}

\let\oldendBsp\endBsp
\def\endBsp{\unskip\nobreak\hfil\penalty50\hskip1em\null\nobreak\hfil%
$\blacksquare$\parfillskip=\z@\finalhyphendemerits=0\endgraf\oldendBsp}
\let\oldendBem\endBem
\def\endBem{\unskip\nobreak\hfil\penalty50\hskip1em\null\nobreak\hfil%
$\blacksquare$\parfillskip=\z@\finalhyphendemerits=0\endgraf\oldendBem}
\makeatother

\date{}

\title{An Exponential Cox--Ingersoll--Ross Process as Discounting Factor}

\author[1]{Julia Eisenberg \thanks{jeisenbe@tuwien.ac.at}}
\author[2]{Yuliya Mishura \thanks{myus@univ.kiev.ua}}
\affil[1]{\small TU Wien/University of Liverpool}
\affil[2]{Taras Shevchenko National University of Kyiv}
\begin{document} 
\maketitle
\begin{abstract}\noindent
We consider an economic agent (a household or an insurance company) modelling its surplus process by a deterministic process or by a Brownian motion with drift. The goal is to maximise the expected discounted spendings/dividend payments, given that the discounting factor is given by an exponential CIR process.
\\In the deterministic case, we are able to find explicit expressions for the optimal strategy and the value function.
\\For the Brownian motion case, we offer a method allowing to show that for a small volatility the optimal strategy is a constant-barrier strategy.
\vspace{6pt}
\noindent
\\{\bf Key words:} Hamilton--Jacobi--Bellman equation, Cox--Ingersoll--Ross process, dividends,  Brownian risk model, consumption.
\settowidth\labelwidth{{\it 2010 Mathematical Subject Classification: }}%
                \par\noindent {\it 2010 Mathematical Subject Classification: }%
                \rlap{Primary}\phantom{Secondary}
                93E20\newline\null\hskip\labelwidth
                Secondary 91B30, 60K10
\end{abstract}
\section{Introduction}
\subsection{General Introduction}
An insurance company's credit rating indicates its ability to pay customer's claims. A bad credit rating can affect company's business plan, growth potential or even survival chances if new finance is needed to fulfil the capital requirements prescribed by Solvency II. 
The rating process run by a credit rating agency includes quantitative and qualitative analysis, where cash flow is one of the most important factors. A particular attention is paid to dividend payments, which are commonly believed to indicate company's financial health. 
Searching for the optimal strategy maximising the value of expected discounted dividends under different constraints and in different setups has been a popular problem in actuarial mathematics for a long time. The papers by Shreve et al. \cite{shreve}, Asmussen and Taksar \cite{astak}, Azcue and Muler \cite{azcue} are just some examples. For a detailed review we refer for instance to the survey by Albrecher and Thonhauser \cite{albthreview}. The papers mentioned above assume the discounting rate to remain constant up to the considered time horizon, often chosen to be infinite.
Following the recent crisis with ultra low interest rates in Europe, the question arises whether the discounting of cash flows by a constant discounting rate could be considered as an admissible assumption. A stochastic discounting factor increases the dimension of the considered problem along with the complexity. Nevertheless, in the recent years stochastic discounting has become a topical question inter alia in dividend maximisation problems.  
For instance in \cite{jp}, the interest rate is modelled by a positive deterministic function of the current state of a given Markov chain. If the drift of the underlying surplus process is positive in each state, Jiang and Pistorius \cite{jp} prove that it is optimal to adopt a regime-dependent barrier strategy; if the drift is small and negative in one state, the optimal strategy has a different form, which is explicitly identified for two regimes case.
\\Akyildirim et al.\ consider in \cite{agrs} two macroeconomic factors: the interest rates and the issuance costs. Both factors are assumed to be governed by an exogenous Markov chain. The optimal dividend policy is characterised in dependence on these two factors: all things being equal, firms distribute more dividends when interest rates are high and less when issuing costs are high. 
\\Whereas Jiang and Pistorius \cite{jp} use the fixed point theorem in order to obtain their results, Akyildirim et al.\ \cite{agrs} apply the direct approach by solving the corresponding ODEs, a method we will use in our paper.  

In the present paper, we are taking into account the time-varying interest by introducing a discounting factor given by an exponential Cox--Ingersoll--Ross (CIR) process. A CIR process is a squared diffusion process, which can attain non-negative values and hit zero for special parameters. 
Usually, by modelling interest rates one assumes CIR to be mean-reverting. Under this assumption, our problem would be ill-posed. Therefore, we require CIR process to be non-mean-reverting, implying the almost sure convergence to infinity. 

We assume that the underlying income process is a linear function of time without a random component. Our target is to maximise the expected discounted consumption. This structure yields a two-dimensional problem where the optimal consumption strategy depends on the parameters of the underlying CIR process. For instance, for a highly volatile discounting factor, it might be optimal to wait with the consumption until the discounting process approaches some relative small positive level, taking into account that the waiting period could last forever. In the low volatility case, we prove that the optimal strategy will be to always spend the maximal possible amount independent of the discounting factor.
\\As an example, we consider an insurance company whose surplus is described by a Brownian motion with drift independent of the CIR. Here, we again have a two-dimensional problem. However, the problem formulation puts an emphasis on the ruin time of the underlying surplus process. We are able to reduce the problem to the classical setup with a constant discounting rate for some special parameters of the CIR process.

To the best of our knowledge, this paper is the first to study an exponential CIR as a discounting factor in the context of consumption/dividend maximisation problems. Despite the fact that the value function depends on two variables -- the surplus and the discounting process -- we are able to find explicit expressions for the optimal strategy and the value function in the deterministic income case and (under some restrictions on the underlying CIR) in the case of Brownian risk model. 

It will be of major importance for the understanding of the paper to remind the reader on some properties and results connected to CIR processes.
Accordingly, we organised the paper as follows: in the next subsection we give an overview over CIR processes. For the convenience of reading, we postpone the technical proofs to the appendix.
\\In Section 2, we consider the case of a deterministic, linear in time income process, which can be interpreted as the income of an individual or household. There, we will distinguish between two different cases concerning the parameters of the considered CIR process and give explicit expressions for the optimal strategy and the value function. Here, we solve the problem of dividend maximisation for special parameters of the underlying CIR process. Conclusion at the end of Sections 2 gives an overview over the possible future research directions. Some technical proofs are given in the appendix, Section \ref{app}. 
\subsection{Preliminaries}\label{facts:cir}
\noindent
For the sake of clarity of presentation, we postpone the most proofs of this subsection to the appendix, Section \ref{app}.
\begin{figure}
\includegraphics[scale=0.2, bb = -100 -100 0 600]{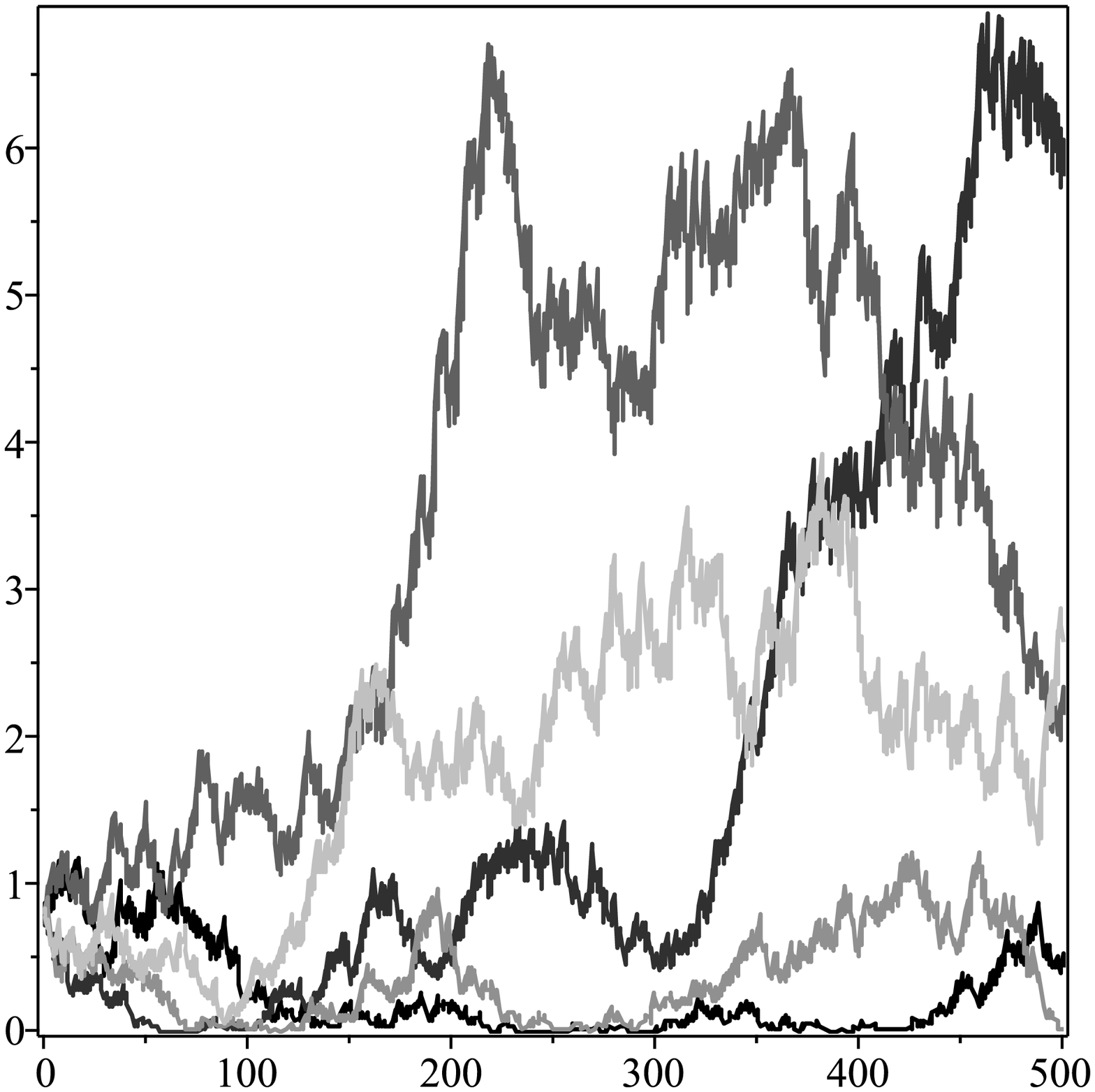}
\includegraphics[scale=0.2, bb = -600 -100 -500 600]{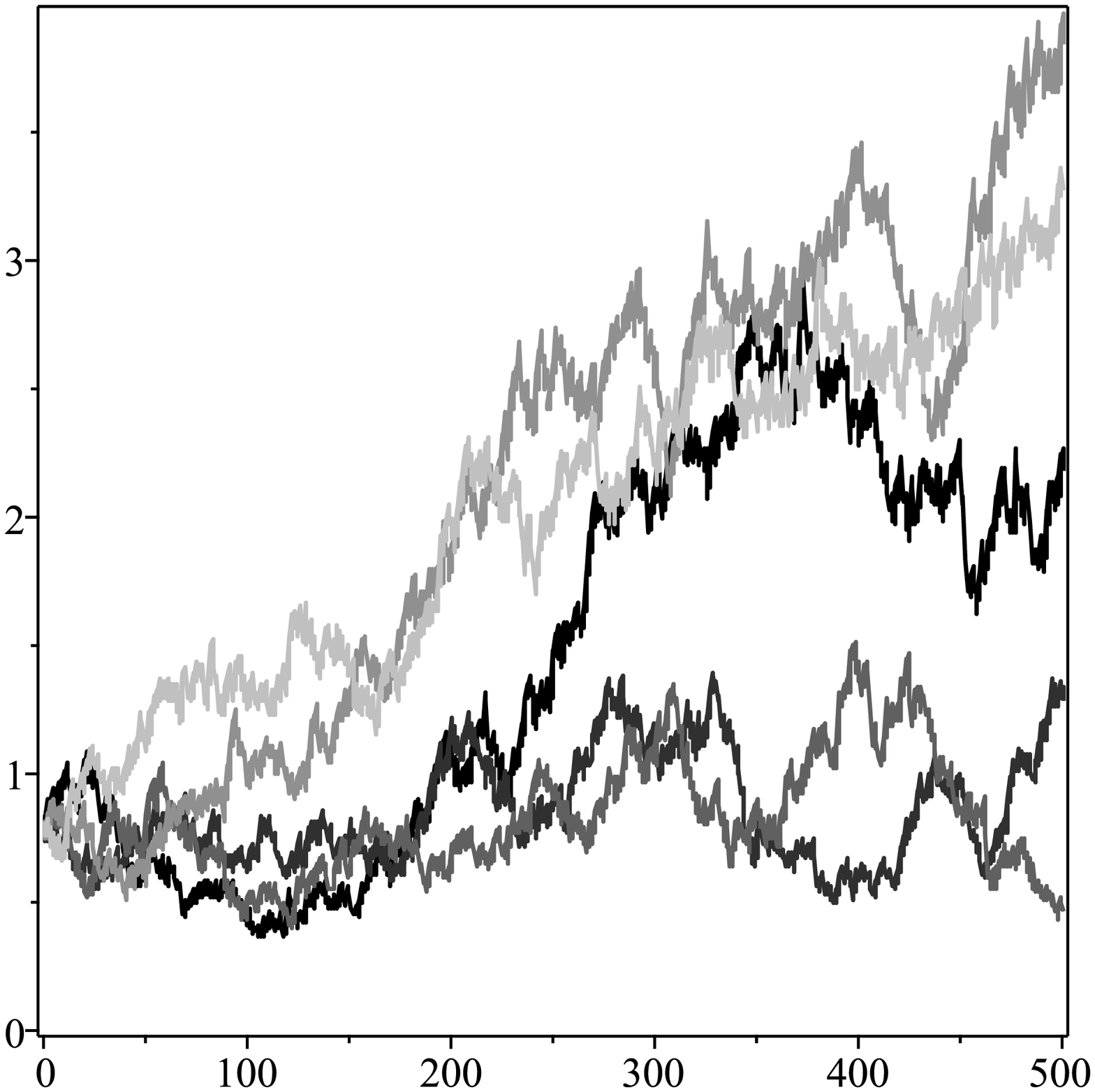}
\includegraphics[scale=0.2, bb = -1100 -100 -1000 600]{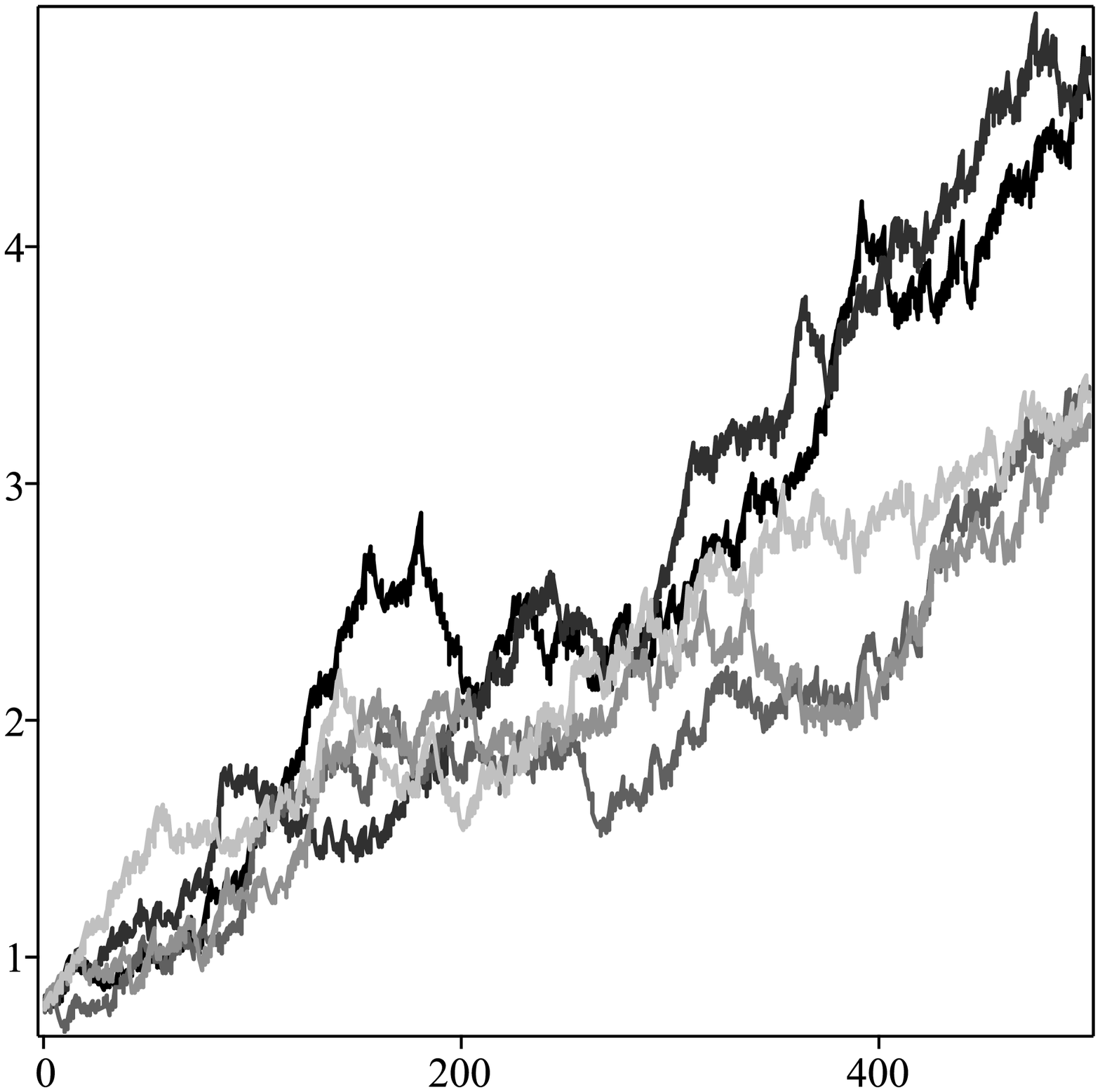}
\caption{Paths of a CIR process with $a=0.001$, $b=0.002$, $\delta_1=0.09$ (left picture), $\delta_2=0.045$ (middle) and $\delta_3=0.02$ (right).\label{fig1}}
\end{figure}
Here and in the following we use the common notation: 
$
\mP[\;\cdot\;|Y_0=y]=\mP_y[\;\cdot\;]$ and $\mE[\;\cdot\; |Y_0=y]=\mE_y[\;\cdot\;]$ for any stochastic process $\{Y_t\}$.
In the remainder of the paper we let $r=\{r_t\}$ be a Cox--Ingersoll--Ross (CIR) process 
\begin{equation}
\md r_t=(a r_t+b)\md t+\delta\sqrt{r_t}\md W_t,\label{intro:cir}
\end{equation}
where $a$, $b$ and $\delta$ are positive constants and $W=\{W_t\}$ is a standard Brownian motion. Due, for example, to \cite{ek}, CIR processes have the strong Markov property.
We define
\[
M(r,t):=\mE_r[e^{-r_t}]\;.
\]
Due to \cite{cir}, we know that the density function of $r_t$ with initial value $r$ is given by
\begin{align}
f(y):=c(t)e^{-u(t,r)-v(t,y)}\Big(\frac{v(t,y)}{u(t,r)}\Big)^{q/2}I_q\big(2\sqrt{u(t,r)v(t,y)}\big),\label{density}
\end{align}
where $I_q(x)=\s_{m=0}^\infty \frac 1{m!\Gamma(m+q+1)}\Big(\frac{x}2\Big)^{2m+q}$ is the modified Bessel function of the first kind and
\begin{align*}
\quad &c(t):=\frac{2a}{(e^{at}-1)\delta^2}, &&q:=\frac {2b}{\delta^2}-1,
\\&u(t,r):=c(t)re^{at}, && v(t,y):=c(t)y.
\end{align*}
Also, one has
\begin{align}
&M(r,t)=\mE_r[e^{- r_t}]= e^{-\frac {2ab}{\delta^2}t}\beta(t)^{\frac{2b}{\delta^2}}\cdot e^{-r \beta(t)},\label{mef}
\end{align}
where $\beta(t):=\frac{1}{\frac{\delta^2}{2a}+\big(1-\frac{\delta^2}{2a}\big)e^{-at}}$.
\begin{Lem}\label{intro:convex}
In the case $\frac{\delta^2}{2}\le a$, the function $M(r,t)$ is strictly decreasing in $t$ and the process $\{e^{-r_t}\}$ is a supermartingale.
\end{Lem}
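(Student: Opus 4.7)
The plan is to bypass the explicit formula (\ref{mef}) and instead apply It\^o's formula directly to $e^{-r_t}$ using the SDE (\ref{intro:cir}); this exhibits a supermartingale decomposition whose drift sign is immediate from the hypothesis, and both claims will drop out simultaneously.

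First I would compute, using $d\langle r\rangle_t=\delta^2 r_t\,dt$ and $f(x)=e^{-x}$,
\begin{equation*}
d(e^{-r_t})=e^{-r_t}\Big[\Big(\tfrac{\delta^2}{2}-a\Big)r_t-b\Big]dt-\delta\sqrt{r_t}\,e^{-r_t}\,dW_t.
\end{equation*}
Under the hypothesis $\delta^2/2\le a$, since $r_t\ge 0$ almost surely, the drift is pathwise bounded above by $-b\,e^{-r_t}<0$; in particular the finite-variation part of $e^{-r_t}$ is strictly decreasing.

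Next, to promote the stochastic integral from a local to a true martingale, I would note that the integrand $\delta\sqrt{r_t}\,e^{-r_t}$ is \emph{deterministically} bounded because $x\mapsto\sqrt{x}\,e^{-x}$ attains a finite maximum on $[0,\infty)$ (at $x=1/2$). Therefore $\int_0^t \delta\sqrt{r_s}\,e^{-r_s}\,dW_s$ is a square-integrable true martingale, and $e^{-r_t}$ decomposes as its starting value plus a pathwise decreasing finite-variation process plus a martingale; hence it is a supermartingale.

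Finally, for strict monotonicity of $M(r,t)$, I would integrate the It\^o identity and take expectations under $\mP_r$, applying Fubini (justified by $e^{-r_s}\le 1$ and the deterministic bound on the drift factor, together with the known first moment bound for $r_s$) to obtain
\begin{equation*}
M(r,t)=e^{-r}+\int_0^t\mE_r\Big[e^{-r_s}\Big(\Big(\tfrac{\delta^2}{2}-a\Big)r_s-b\Big)\Big]\,ds.
\end{equation*}
Differentiating in $t$ and using the drift bound gives $\partial_tM(r,t)\le -b\,M(r,t)<0$, so $M(r,\cdot)$ is strictly decreasing. The only mildly technical point is the true-martingale upgrade in the second step; everything else is a routine It\^o computation and a sign inspection hinging exactly on $\delta^2/2\le a$.
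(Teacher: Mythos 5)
Your proof is correct, and it takes a genuinely different route from the paper. The paper works from the explicit closed-form expression \eqref{mef} for $M(r,t)$: it computes $M_t(r,t)=M(r,t)\{-b\beta(t)-rae^{-at}(1-\tfrac{\delta^2}{2a})\beta(t)^2\}<0$ directly and then deduces the supermartingale property from the Markov property together with the monotonicity of $M$ (i.e.\ $\mE[e^{-r_t}\mid\mF_s]=M(r_s,t-s)\le M(r_s,0)=e^{-r_s}$). You instead go straight to the SDE via It\^o's formula, obtain the decomposition of $e^{-r_t}$ into a pathwise decreasing finite-variation part plus a true martingale (the integrand bound via $\sup_{x\ge0}\sqrt{x}\,e^{-x}<\infty$ is exactly the right observation to upgrade the local martingale), and then recover the monotonicity of $M$ by taking expectations. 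Your argument is more self-contained -- it does not rely on the quoted Laplace-transform formula from the CIR literature -- and it makes the supermartingale structure completely transparent, whereas the paper's computation yields the sharper quantitative form of $M_t(r,t)$ essentially for free once \eqref{mef} is accepted. Two small remarks: the Fubini justification does not even need a moment bound on $r_s$, since $x\mapsto xe^{-x}$ is itself bounded on $[0,\infty)$, so the whole drift integrand is deterministically bounded; and the final differentiation in $t$ can be avoided entirely, since the integral representation already gives $M(r,t_2)-M(r,t_1)\le -b\int_{t_1}^{t_2}M(r,s)\,\md s<0$ for $t_1<t_2$.
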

\begin{proof}
Using that $\beta'(t)=a\big(1-\frac{\delta^2}{2a}\big)e^{-at}\beta(t)^2>0$, we obtain
\begin{align*}
M_t(r,t)=M(r,t)\Big\{-b \beta(t)-r ae^{-at}\big(1-\frac{\delta^2}{2a}\big)\beta(t)^2\Big\}<0
\end{align*}
for all $r\in\R_+$. The supermartingale property follows immediately due to the Markov property and the structure of $M$. 
\end{proof}
\begin{Lem}\label{inro:diffeq}
Due to \cite[p.\ 282]{revuzyor}, the function $M$ solves the partial differential equation
\[
(ar+b)M_r(r,t)+\frac{\delta^2 r}{2}M_{rr}(r,t)-M_t(r,t)=0\;.
\]
\end{Lem}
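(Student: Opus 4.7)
The statement is essentially the Kolmogorov backward equation for the CIR diffusion: if $X$ solves $\md X_t = \mu(X_t)\md t + \sigma(X_t)\md W_t$ with infinitesimal generator $\mathcal{L}f(r)=\mu(r)f'(r)+\tfrac{1}{2}\sigma(r)^{2}f''(r)$, then for sufficiently nice $g$ the function $u(r,t)=\mE_r[g(X_t)]$ satisfies $u_t = \mathcal{L}u$. For (\ref{intro:cir}) we have $\mu(r)=ar+b$, $\sigma(r)^2=\delta^2 r$, and $g(x)=e^{-x}$, which yields exactly the stated PDE. So the primary route is to quote the Revuz--Yor reference, after noting that $g(x)=e^{-x}$ is bounded and $C^\infty$ so the standard hypotheses for the backward equation are met (and $M$ inherits $C^{1,2}$ regularity from the explicit formula (\ref{mef})).

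Since the closed form (\ref{mef}) is already available, the plan I would prefer is to give a direct, self-contained verification by substitution, which avoids appealing to a general theorem. From $M(r,t)=e^{-\frac{2ab}{\delta^2}t}\beta(t)^{2b/\delta^{2}}e^{-r\beta(t)}$ I would read off
\[
M_r=-\beta(t)M,\qquad M_{rr}=\beta(t)^2 M,\qquad \frac{M_t}{M}=-\frac{2ab}{\delta^{2}}+\frac{2b}{\delta^{2}}\frac{\beta'(t)}{\beta(t)}-r\beta'(t).
\]
The bookkeeping then reduces to one algebraic identity for $\beta$, namely $\beta'(t)=a\beta(t)-\tfrac{\delta^2}{2}\beta(t)^2$. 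This is immediate from the expression $\beta'(t)=a(1-\tfrac{\delta^2}{2a})e^{-at}\beta(t)^2$ (already used in the proof of Lemma \ref{intro:convex}) combined with $\beta(t)^{-1}=\tfrac{\delta^2}{2a}+(1-\tfrac{\delta^2}{2a})e^{-at}$.

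Plugging everything into $(ar+b)M_r+\tfrac{\delta^2 r}{2}M_{rr}-M_t$ and dividing by $M>0$, the $r$-linear terms cancel in two pairs: $-ar\beta(t)$ against $+ar\beta(t)$ coming from $r\beta'(t)$, and $+\tfrac{\delta^{2}r}{2}\beta(t)^{2}$ against $-\tfrac{\delta^{2}r}{2}\beta(t)^{2}$ coming from $r\beta'(t)$. The $r$-independent terms likewise cancel: $-b\beta(t)$ against the $\tfrac{\delta^{2}}{2}\beta(t)$ contribution from $\tfrac{2b}{\delta^{2}}\beta'(t)/\beta(t)$, while $+\tfrac{2ab}{\delta^{2}}$ against $-\tfrac{2ab}{\delta^{2}}$ from the same term.

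The only conceptual obstacle is deciding which route to take, not any real computational difficulty; once the identity for $\beta'$ is in hand, both approaches are essentially one-liners. I would present the direct verification because it keeps the paper self-contained and reuses a formula the reader has already seen in the proof of Lemma \ref{intro:convex}.
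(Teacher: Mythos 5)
Your verification is correct, and it goes further than the paper does: the paper offers no proof of Lemma \ref{inro:diffeq} at all beyond the citation of Revuz--Yor (the Kolmogorov backward equation for the diffusion \eqref{intro:cir} applied to the bounded smooth function $g(x)=e^{-x}$), which is exactly your first route. Your second, preferred route --- direct substitution of the closed form \eqref{mef} --- is a genuinely different and entirely self-contained argument, and the key algebraic step checks out: from $\beta(t)^{-1}=\tfrac{\delta^2}{2a}+\bigl(1-\tfrac{\delta^2}{2a}\bigr)e^{-at}$ and $\beta'(t)=a\bigl(1-\tfrac{\delta^2}{2a}\bigr)e^{-at}\beta(t)^2$ one indeed gets the Riccati identity $\beta'=a\beta-\tfrac{\delta^2}{2}\beta^2$, and with $M_r=-\beta M$, $M_{rr}=\beta^2 M$ and $M_t/M=-\tfrac{2ab}{\delta^2}+\tfrac{2b}{\delta^2}\beta'/\beta-r\beta'$ all terms cancel as you describe. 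What the citation buys is brevity and the fact that it explains \emph{why} such a PDE must hold (it is the generator acting on a conditional expectation, which is the conceptual point the paper wants for the verification arguments later); what your substitution buys is independence from the regularity hypotheses of the general theorem and a check that the explicit formula \eqref{mef} is consistent with the PDE. Either is acceptable; presenting the substitution would make the paper more self-contained at the cost of a short computation.
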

The below lemma ensures the well-definiteness of the problems we are going to consider.
\begin{Lem}\label{app:lem1}
If $a>0$ then the CIR process $\{r_t\}$ fulfils $\lim\limits_{t\to\infty}r_t=\infty$ a.s. 
%\hfill $\blacksquare$
\end{Lem}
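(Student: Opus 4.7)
The plan is to extend Lemma \ref{intro:convex} to the full range $a>0$ by rescaling the CIR process so that its diffusion coefficient satisfies the bound required in that lemma. Set $\alpha:=2a/\delta^2$ and $\tilde r_t:=\alpha r_t$. Linearity of (\ref{intro:cir}) yields
$$\md\tilde r_t=(a\tilde r_t+\alpha b)\md t+\sqrt{\alpha}\,\delta\sqrt{\tilde r_t}\md W_t,$$
so $\tilde r$ is itself a CIR process with parameters $(a,\alpha b,\sqrt{\alpha}\delta)$. Its diffusion coefficient $\sqrt{\alpha}\delta=\sqrt{2a}$ satisfies $(\sqrt{\alpha}\delta)^2/2=a$, which is the borderline case of Lemma \ref{intro:convex}. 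That lemma therefore applies to $\tilde r$, and $\{e^{-\alpha r_t}\}=\{e^{-\tilde r_t}\}$ is a nonnegative supermartingale.

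The supermartingale convergence theorem then produces a nonnegative random variable $L$ with $e^{-\alpha r_t}\to L$ a.s. To identify $L$, I would apply formula (\ref{mef}) to $\tilde r$: since $\alpha\delta^2/(2a)=1$, the auxiliary function simplifies to $\tilde\beta(t)\equiv 1$, and
$$\mE_r[e^{-\alpha r_t}]=\mE_{\alpha r}[e^{-\tilde r_t}]=e^{-\frac{2ab}{\delta^2}t}\,e^{-\alpha r},$$
which tends to $0$ as $t\to\infty$ because $a,b>0$. Fatou's lemma then gives $\mE[L]\le\liminf_t\mE_r[e^{-\alpha r_t}]=0$, so $L=0$ a.s.; equivalently $e^{-\alpha r_t}\to 0$ a.s., i.e.\ $r_t\to\infty$ a.s.

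The only non-routine step is the rescaling observation: setting $\alpha=2a/\delta^2$, the process $\alpha r_t$ again satisfies a CIR SDE with diffusion coefficient exactly $\sqrt{2a}$, sitting precisely at the boundary of the regime covered by Lemma \ref{intro:convex}. Spotting this rescaling is the key move; once it is made, the argument reduces to one application each of formula (\ref{mef}), the supermartingale convergence theorem, and Fatou's lemma, with no genuinely hard step remaining.
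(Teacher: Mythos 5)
Your proof is correct, but it takes a genuinely different route from the paper. The paper argues by contradiction directly on $e^{-r_t}$: if $\liminf_{t\to\infty} r_t$ were finite on a set $A$ of positive probability, then along a subsequence $t_n$ one would have $\lim_n \mE[e^{-r_{t_n}}\one_A]=e^{-B}\mP[A]>0$, contradicting $\lim_{t\to\infty}M(r,t)=0$ from \eqref{mef}. You instead rescale by $\alpha=2a/\delta^2$ so that $\tilde r_t=\alpha r_t$ is a CIR process sitting exactly at the boundary $\tilde\delta^2=2a$ of Lemma \ref{intro:convex}, obtain the nonnegative supermartingale $\{e^{-\alpha r_t}\}$, and then combine the supermartingale convergence theorem with \eqref{mef} (where $\tilde\beta\equiv 1$) and Fatou to identify the a.s.\ limit as $0$. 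Both arguments ultimately rest on the explicit first-moment formula \eqref{mef}; what your version buys is that the a.s.\ convergence of $e^{-\alpha r_t}$ comes for free from martingale theory, so you never have to manipulate an $\omega$-dependent level $B$ and an $\omega$-dependent subsequence $t_n$ inside an expectation --- a point where the paper's proof is somewhat informal (as written, $B$ is a random variable and dominated convergence is applied along a random subsequence). The cost is the extra rescaling observation, which is correct: linearity of \eqref{intro:cir} does give $\md\tilde r_t=(a\tilde r_t+\alpha b)\md t+\sqrt{2a}\sqrt{\tilde r_t}\md W_t$, and the borderline case $\tilde\delta^2/2=a$ is indeed covered by Lemma \ref{intro:convex} (there $\beta'\equiv 0$ but $M_t<0$ still holds because $b>0$). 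Your proof is, if anything, the more airtight of the two.
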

For the proof confer the appendix, Section \ref{app}.\smallskip
\\The usual method for proving a verification theorem is to apply Ito's formula, and to prove the stochastic integral to be a martingale. Later, we will see that the following result provides the necessary martingale argument for the verification theorem.  
\begin{Lem}\label{lem:endl}
For $2b<\delta^2$ let $q$ be given like in \eqref{density} then it holds
\begin{align*}
&\int_0^\infty y^{-\frac{2b}{\delta^2}} e^{-\frac{2a}{\delta^2}y}\md y<\infty,
\\&\int_0^s \mE_r\big[r_t^{-2q-1}\big]\md t<\infty \quad\mbox{for all $s\in\R_+$}.
\end{align*}
\end{Lem}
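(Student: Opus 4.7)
For the first integral, my plan is to substitute $z=\frac{2a}{\delta^2}y$, turning the integral into $\big(\frac{2a}{\delta^2}\big)^{q}\Gamma(-q)$, where $q=\frac{2b}{\delta^2}-1\in(-1,0)$ under the hypothesis $2b<\delta^2$. Since $-q\in(0,1)$, this Gamma value is finite, and the integral converges.

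For the second integral, I would first derive a tractable expression for $\mE_r[r_t^{-2q-1}]$ at each fixed $t>0$. Starting from the density \eqref{density}, I substitute the Bessel series $I_q(2\sqrt{uv})=\s_{m=0}^\infty\frac{(uv)^{m+q/2}}{m!\,\Gamma(m+q+1)}$ to rewrite $f(y)=c(t)\,e^{-u-v}v^q\s_m\frac{(uv)^m}{m!\,\Gamma(m+q+1)}$, with $u=u(t,r)$ and $v=c(t)y$. Multiplying by $y^{-2q-1}$ and interchanging sum and integral (via Tonelli, since all summands are non-negative), every term collapses to a Gamma integral $\int_0^\infty y^{m-q-1}e^{-c(t)y}\md y=c(t)^{q-m}\Gamma(m-q)$; the $m=0$ contribution is exactly of the type handled in the first part of the lemma (with $c(t)$ in place of $2a/\delta^2$). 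This yields the closed form $\mE_r[r_t^{-2q-1}]=c(t)^{2q+1}e^{-u(t,r)}\s_{m=0}^\infty\frac{u(t,r)^m\,\Gamma(m-q)}{m!\,\Gamma(m+q+1)}$, finite for every fixed $t>0$.

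To upgrade pointwise finiteness to integrability in $t$ on $[0,s]$, the only delicate point is $t\downarrow0$, where both $c(t)\sim 2/(\delta^2 t)$ and $u(t,r)\sim 2r/(\delta^2 t)$ blow up. The above sum is a constant multiple of the Kummer function ${}_1F_1(-q;q+1;u)$, whose large-$u$ asymptotic ${}_1F_1(-q;q+1;u)\sim\frac{\Gamma(q+1)}{\Gamma(-q)}e^u u^{-2q-1}$ cancels the prefactor $c(t)^{2q+1}e^{-u}$ precisely, giving $\mE_r[r_t^{-2q-1}]\to r^{-2q-1}$ as $t\downarrow0$. Together with continuity on $(0,s]$ this provides a uniform bound on $[0,s]$, and the time integral is finite.

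The main obstacle is exactly this $t\downarrow0$ analysis: the density concentrates at $r$, but the explicit formula packages this through a rapid decay $e^{-u}$ cancelling a polynomial blow-up $c(t)^{2q+1}$, and one must invoke the correct hypergeometric asymptotic to see the cancellation. A cleaner alternative route is the non-central chi-square representation $2c(t)\,r_t\mid r_0=r\sim\chi^2_{4b/\delta^2}(2u(t,r))$: existence of negative moments of order $\alpha>-d/2$ for $\chi^2_d(\lambda)$ is classical, and $\alpha=-2q-1>-(q+1)=-\frac{d}{2}$ under $2b<\delta^2$; combining with the explicit scaling $c(t)^{2q+1}$ and a uniform-in-$t$ moment bound reaches the same conclusion.
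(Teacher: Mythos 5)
Your proof of the first integral is identical to the paper's: the substitution reduces it to $\Gamma(-q)\big(\tfrac{2a}{\delta^2}\big)^{q}$, finite because $-1<q<0$. For the second integral you follow the same basic route as the paper -- expand $I_q$ into its power series, integrate term by term against $y^{-2q-1}$ to get Gamma integrals -- but you finish differently, and more carefully. The paper stops at the series $\s_{m=0}^\infty\frac{c(t)^{q+1+2m}e^{-c(t)re^{at}}}{m!\,\Gamma(m+q+1)}\cdot\frac{\Gamma(m-q)}{(\cdot)^{m-q}}$ and argues integrability over $(0,s)$ solely from $\lim_{t\to0}c(t)^{q+1+2m}e^{-c(t)re^{at}}=0$ for each fixed $m$; term-by-term vanishing of the coefficients does not by itself control the sum uniformly near $t=0$, so this step is rather thin. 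You instead resum the series into $c(t)^{2q+1}e^{-u}\frac{\Gamma(-q)}{\Gamma(q+1)}\,{}_1F_1(-q;q+1;u)$ and invoke the large-argument asymptotic of the Kummer function to show the exact cancellation $\mE_r[r_t^{-2q-1}]\to r^{-2q-1}$ as $t\downarrow0$, which together with continuity on $(0,s]$ gives a genuine uniform bound; this is the cleaner and more complete argument (and your noncentral chi-square alternative, with the moment condition $-2q-1>-\tfrac{2b}{\delta^2}$ reducing exactly to $2b<\delta^2$, is a nice sanity check the paper does not offer). Two small remarks: your closed form and your limit both require $r>0$ (for $r=0$ the expectation behaves like $c(t)^{2q+1}$, which is still integrable near $0$ since $2q+1<1$, but not bounded), an assumption the paper's argument also makes implicitly; and the paper's displayed series actually carries an extra factor $e^{-\frac{4a}{\delta^2}y}$, i.e.\ it computes $\mE_r\big[r_t^{-2q-1}e^{-\frac{4a}{\delta^2}r_t}\big]$, which is what is needed later and is dominated by the quantity you bound, so nothing is lost.
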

\begin{proof}
Due to $2b<\delta^2$ it holds $-1<q<0$ and 
\begin{align*}
\int_0^\infty y^{-\frac{2b}{\delta^2}} e^{-\frac{2a}{\delta^2}y}\md y&= \int_0^\infty y^{\big(1-\frac{2b}{\delta^2}\big)-1} e^{-\frac{2a}{\delta^2}y}\md y
=\Gamma(-q)\Big(\frac{2a}{\delta^2}\Big)^{q}<\infty\;.
\end{align*}
Further, using \eqref{density} and the bounded convergence theorem, we get:
\begin{align*}
\mE_r\big[r_t^{-2q-1}\big]&=\s_{m=0}^\infty\frac{c(t)^{q+1+2m}e^{-c(t)re^{at}}}{m!\Gamma(m+q+1)}\int_0^\infty y^{m-\frac{2b}{\delta^2}}e^{-\frac{4a}{\delta^2}y}e^{-c(t)y}\md y
\\&=\s_{m=0}^\infty\frac{c(t)^{q+1+2m}e^{-c(t)re^{at}}}{m!\Gamma(m+q+1)}\cdot\frac{\Gamma(m-q)}{\big(\frac{4a}{\delta^2}+c(t)\big)^{m-q}}\;.
\end{align*}
Due to $\lim\limits_{t\to 0} c(t)^{q+1+2m}e^{-c(t)re^{at}}=0$, the above power series is integrable over $(0,s)$ for every $s\in \R_+$. 
\end{proof}
Throughout this paper we will use the following notation:
for a fixed $r^*\in\R_+$, we define
\begin{align*}
&\tau:=\inf\{t \ge 0: r_t=r^*,\; r_0=r\le r^*\}
\\&\rho:=\inf\{t \ge 0: r_t=r^*,\; r_0=r\ge r^*\}\;.
\end{align*}
Further, we let
\begin{align}
&\psi_1(r):=\mE_r\Big[\int_0^{\tau}e^{-r_s}\md s\Big], \mbox{for $r\le r^*$},\label{intro:def1}
\\&\phi_1(r):=\mE_r\Big[\one_{[\rho<\infty]}\Big], \mbox{for $r\ge r^*$},\label{intro:def2}
\\&\phi_2(r):=\mE_r\Big[\one_{[\rho<\infty]}\rho\Big] \mbox{for $r\ge r^*$}.\label{intro:def3}
\end{align}
Since $\lim\limits_{t\to\infty} r_t=\infty$ a.s., we know $\tau<\infty$ a.s.
\begin{Lem}\label{intro:lem1}
The functions $\psi_1(r)$, $\phi_1(r)$ and $\phi_2(r)$ solve the differential equations
\begin{align}
&e^{-r}+(ar+b)g'(r)+\frac{\delta^2r}2 g''(r)=0, \label{eq3}
\\&(ar+b)g'(r)+\frac{\delta^2r}2g''(r)=0,\label{eq1}
\\&(ar+b)g'(r)+\frac{\delta^2r}2g''(r)+\phi_1(r)=0,\label{eq2}
\end{align}
correspondingly with boundary conditions
\begin{align*}
&\psi_1(r^*)=0 \mbox{ and } \psi_1'(0)=-\frac 1b,
\\&\phi_1(r^*)=1\mbox{ and } \phi_1(\infty)=0,
\\&\phi_2(r^*)=0\mbox{ and } \phi_2(\infty)=0. 
\end{align*}
\end{Lem}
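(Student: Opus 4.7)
The plan is to recognise each of $\psi_1$, $\phi_1$, $\phi_2$ as a Feynman--Kac expectation associated with the infinitesimal generator
\[
\mathcal{L}g(r):=(ar+b)\,g'(r)+\tfrac{\delta^2 r}{2}\,g''(r)
\]
of the CIR process (\ref{intro:cir}), so that (\ref{eq3}), (\ref{eq1}) and (\ref{eq2}) read respectively as $\mathcal{L}\psi_1+e^{-r}=0$, $\mathcal{L}\phi_1=0$ and $\mathcal{L}\phi_2+\phi_1=0$. Each ODE will then follow from the vanishing-drift condition for an appropriate bounded martingale obtained via the strong Markov property of $\{r_t\}$.

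For $\psi_1$, the strong Markov property at time $t\w\tau$ shows that
\[
N_t:=\psi_1(r_{t\w\tau})+\int_0^{t\w\tau}e^{-r_s}\,\md s=\mE_r\Big[\int_0^{\tau}e^{-r_s}\,\md s\,\Big|\,\mathcal{F}_{t\w\tau}\Big]
\]
is a bounded martingale (note $\tau<\infty$ a.s.\ by Lemma \ref{app:lem1}). Assuming $\psi_1\in C^2((0,r^*))$ and applying Itô's formula yields
\[
\md N_t=\bigl(\mathcal{L}\psi_1(r_t)+e^{-r_t}\bigr)\one_{[t<\tau]}\,\md t+\delta\sqrt{r_t}\,\psi_1'(r_t)\one_{[t<\tau]}\,\md W_t,
\]
so the drift must vanish on $(0,r^*)$, giving (\ref{eq3}). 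The same recipe applied to $\phi_1(r_{t\w\rho})=\mP_r[\rho<\infty\mid\mathcal{F}_{t\w\rho}]$ delivers (\ref{eq1}). For $\phi_2$ I first rewrite $\one_{[\rho<\infty]}\rho=\int_0^{\rho}\one_{[\rho<\infty]}\,\md s$ and combine Fubini with strong Markov at time $s$ to obtain
\[
\phi_2(r)=\mE_r\Big[\int_0^{\rho}\phi_1(r_s)\,\md s\Big];
\]
the martingale argument applied to $\phi_2(r_{t\w\rho})+\int_0^{t\w\rho}\phi_1(r_s)\,\md s$ then produces (\ref{eq2}).

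The boundary values at $r^*$ are immediate from the definitions, since $\tau=0$ (respectively $\rho=0$) when $r_0=r^*$. For $\psi_1'(0)=-1/b$, the diffusion coefficient $\delta^2 r/2$ degenerates at $r=0$ and evaluating (\ref{eq3}) at $r=0$ gives $1+b\psi_1'(0)=0$. The conditions $\phi_1(\infty)=\phi_2(\infty)=0$ follow from Lemma \ref{app:lem1}: the a.s.\ divergence $r_t\to\infty$ together with a pathwise comparison (a larger $r_0$ yields a stochastically larger path) force $\phi_1(r)\to 0$ as $r\to\infty$, and then $\phi_2(\infty)=0$ follows by dominated convergence from the representation above.

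The main obstacle is the $C^2$ regularity of $\psi_1,\phi_1,\phi_2$ needed to apply Itô's formula, together with a careful treatment of the degenerate endpoint $r=0$ (where the CIR behaviour depends on the Feller ratio $2b/\delta^2$) and of the possibly infinite stopping time $\rho$. The cleanest workaround is to solve the linear ODEs (\ref{eq3})--(\ref{eq2}) explicitly -- the homogeneous equation (\ref{eq1}) integrates to $g'(r)=Cr^{-2b/\delta^2}e^{-2ar/\delta^2}$, whence (\ref{eq3}) and (\ref{eq2}) follow by variation of parameters -- and then run a verification argument: apply Itô to the explicit smooth solution, use Lemma \ref{lem:endl} to guarantee that the stochastic integral is a true martingale in the delicate Feller-violating case $2b<\delta^2$, and let $t\to\infty$ to identify the explicit solution with $\psi_1$, $\phi_1$, or $\phi_2$.
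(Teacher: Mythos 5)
Your proposal is correct and, in its final form, coincides with the paper's own proof: the paper likewise solves the ODEs explicitly (with homogeneous derivative $Cr^{-2b/\delta^2}e^{-2ar/\delta^2}$ and variation of parameters), fixes $\psi_1'(0)=-1/b$ by the degeneracy at $r=0$, applies It\^o's formula to the explicit smooth solutions, invokes Lemma \ref{lem:endl} to get the martingale property of the stochastic integral in the case $2b<\delta^2$, uses the same Fubini/strong-Markov identity $\phi_2(r)=\mE_r\big[\int_0^{\rho}\phi_1(r_s)\,\md s\big]$, and lets $t\to\infty$. Your preliminary Feynman--Kac discussion and the closing "workaround" are exactly the route taken in the appendix, so there is nothing to add.
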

For the proof confer the appendix, Section \ref{app}.
\section{Main Results}
Before considering an insurance company with surplus process following a Brownian motion, we look at the problem of consumption maximisation for an individual with a deterministic income. The discounting factor is assumed to be given by an exponential CIR process, $\{e^{-r_t}\}$. The filtration $\{\mF_t\}$ is generated by $\{r_t\}$
So, let the income process of the considered individual or household be given by
\[
X_t=x+\mu t\;,\quad \mbox{$\mu>0$.}
\]
Let $C$ denote the accumulated consumption process up to time $t$ and the ex-consumption income be given by
\[
X^C_t=x+\mu t-C_t\;.
\]
We call a strategy $C$ admissible if it is adapted to the filtration $\{\mF_t\}$, is non-decreasing and fulfils $C_0\ge 0$, $X_t^C\ge 0$ for all $t\in\R_+$, meaning in particular that $C_t\le x+\mu t$. In the following, we denote the set of all admissible strategies by $\mathfrak A$. Our target is to find an optimal consumption strategy, such that 
\[
\mE_{(r,x)}\Big[\int_0^\infty e^{-r_s}\md C_s\Big]\to\max!
\]
i.e. the expected discounted consumption is maximised. The following notation will be used throughout this section:
\begin{align*}
&V^C(r,x):=\mE_{(r,x)}\Big[\int_0^\infty e^{-r_s}\md C_s\Big],
\\&V(r,x):=\sup\limits_{C\in\mathfrak A} V^C(r,x)\;.
\end{align*}
The Hamilton--Jacobi--Bellman (HJB) equation can be motivated using the standard methods from stochastic control theory, so that we omit the detailed derivation and just refer to \cite[pp.\ 98,103]{hs} and references therein. The HJB turns out to consist of two partial differential equations with linear coefficients:
\begin{align}
\max\big\{\mu V_x+(ar+b) V_r+\frac{\delta^2r}{2}V_{rr},e^{-r}-V_x\big\}=0\;.\label{hjb1}
\end{align}
In particular, Proposition \ref{sec0:ver} below will illustrate that the HJB equation corresponds to the considered problem. 
\medskip
\\To simplify our considerations we introduce the following notation
\[
\mathcal L(f)(r,x)=\mu f_x(r,x)+(ar+b) f_r(r,x)+\frac{\delta^2r}{2}f_{rr}(r,x)
\]
for any appropriate function $f:=\R_+^2\to \R$.

In order to get an idea how the value function and the optimal strategy look like, we consider first the performance function corresponding to the strategy ``maximal spending''. This function is given by
\begin{align}
H(r,x):=x e^{-r}+\mu\mE_r\Big[\int_0^\infty e^{-r_s}\md s\Big]= x e^{-r}+\mu\int_0^\infty M(r,s)\md s\;,\label{maxpayout}
\end{align}
with $M(r,t)=\mE_r[e^{-r_t}]$ and using Fubini's theorem. The function $M$ fulfils $M\in \mathcal C^{2,1}(\R_+^2)$ and solves the differential equation, confer the appendix, Section \ref{app}:
\[
(ar+b)M_r(r,t)+\frac{\delta^2 r}{2}M_{rr}(r,t)-M_t(r,t)=0\;.
\]
This in particular means that, using the Leibniz integral rule, we get
\[
H_r(r,x)=- x e^{-r}+\mu\int_0^\infty M_r(r,s)\md s\quad\mbox{and} \quad H_{rr}(r,x)=x e^{-r}+\mu\int_0^\infty M_{rr}(r,s)\md s\;.
\]
Inserting $H(r,x)$ into the HJB equation yields on the one hand $e^{-r}-H_x(r,x)=0$ and on the other hand using $\int_0^\infty M_s(r,s)\md s=M(r,s)\Big|_0^\infty=-e^{-r}$ and the differential equation for $M$ we obtain that: 
\begin{align*}
\mu H_x+(ar+b) H_r+\frac{\delta^2r}{2}H_{rr}&=\mu e^{-r}+xe^{-r}\big(-ar-b+\frac{\delta^2r}2\big)+\mu\\&{}\int_0^\infty (ar+b)M_r(r,s)+\frac{\delta^2 r}{2}M_{rr}(r,s)\md s
\\&=\mu e^{-r}+xe^{-r}\big(-ar-b+\frac{\delta^2r}2\big)+\mu\int_0^\infty M_s(r,s)\md s
\\&=\mu e^{-r}+xe^{-r}\big(-ar-b+\frac{\delta^2r}2\big)-\mu e^{-r}
\\&= xe^{-r}\big(-ar-b+\frac{\delta^2r}2\big)\;.
\end{align*}
Note that the sign of the above expression does not depend on $x$ and define for $\frac{\delta^2}2>a$
\begin{equation}
R:=\frac b{\frac{\delta^2}2-a}\;.\label{R}
\end{equation}
\\In the following we will consider different combinations of the parameters $a$, $b$ and $\delta$, influencing the solution to the HJB equation \eqref{hjb1}.
%%%%%%%%%%%%%%%%%%%%%%%%%%%%%%%%%%%%%%%%%%%%%%%%%%%%%%%%%%%%%%%%%%%%%%%%%%%%%%%%%%%%%%%%%
\subsection{The case $\frac{\delta^2}2\le a$\label{sub:1}}
In this case it obviously holds 
\[
-ar-b+\frac{\delta^2r}2<0
\]
independently of $b$ and $r$, which in turn means that $H(r,x)$ solves the HJB equation \eqref{hjb1}. 
%If $H(r,x)$ is the value function, the optimal strategy would be to always spend the maximal possible amount independent of $r$ and $x$.  
\\Now, we can formulate the following verification theorem:
\begin{Thm}\label{thm1}
The function $H(r,x)$ is the value function and the strategy $C^{\max}_t:=x+\mu t$ ``$\rightsquigarrow$ to always spend the maximal possible amount independent of $r$ and $x$'' is the optimal strategy.
\end{Thm}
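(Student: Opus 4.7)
My plan is the standard verification argument for a singular control problem with a candidate value function in hand. The computation preceding Theorem \ref{thm1} has already shown that in the regime $\frac{\delta^2}{2}\le a$ the function $H$ is in $\mathcal C^{2,1}(\R_+^2)$, satisfies $H_x(r,x)\equiv e^{-r}$, and has $\mathcal L(H)(r,x)=xe^{-r}\bigl(-ar-b+\frac{\delta^2 r}{2}\bigr)\le 0$, so $H$ solves the HJB equation \eqref{hjb1}. It remains to (i) turn these HJB inequalities into the upper bound $V^C(r,x)\le H(r,x)$ for every $C\in\mathfrak A$, and (ii) verify that $C^{\max}$ attains this bound.

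For (i), I would fix $C\in\mathfrak A$ and apply It\^o's formula to $H(r_t,X_t^C)$ on $[0,T]$, separating the initial jump $C_0\ge 0$ and the later jumps of $C$ from its continuous part. Since $H$ is affine in $x$ with $H_x\equiv e^{-r}$, both the continuous integrator $-\int_0^T H_x\,\md C_s^c$ and the jump sum $\sum_{s\le T}\bigl[H(r_s,X_s^C)-H(r_s,X_{s-}^C)\bigr]$ equal $-\int_{[0,T]}e^{-r_s}\md C_s$ \emph{exactly}, and the only genuine inequality enters through $\mathcal L(H)\le 0$ on the Lebesgue drift. This yields
\[
H(r_T,X_T^C)\;\le\; H(r,x)\;-\;\int_0^T e^{-r_s}\md C_s\;+\;\int_0^T H_r(r_s,X_s^C)\,\delta\sqrt{r_s}\,\md W_s.
\]

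The main technical obstacle is to remove the stochastic integral after taking expectations. Differentiating \eqref{mef} gives $M_r(r,s)=-\beta(s)M(r,s)$, and in the present regime $\beta$ is bounded on $\R_+$: indeed $\beta\equiv 1$ when $\delta^2/2=a$, and $1\le\beta(s)\le 2a/\delta^2$ when $\delta^2/2<a$. Hence the explicit formula for $H_r$ shows that $H_r$ is bounded on compact $r$-subsets, and the stopped integral under $\tau_n:=\inf\{t\ge 0:r_t\ge n\}$ is a true martingale. Using $\mE_r\bigl[\int_0^T r_s\md s\bigr]<\infty$ and $\tau_n\nearrow\infty$ a.s., I would let $n\to\infty$; combining Fatou's lemma applied to $H\ge 0$ with monotone convergence on the consumption integral, taking $T\to\infty$ then produces $V^C(r,x)\le H(r,x)$.

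Part (ii) is immediate from the very construction of $H$: for $C^{\max}_t=x+\mu t$ the initial jump $C_0=x$ contributes $x e^{-r_0}=x e^{-r}$ to the expected discounted consumption, while the continuous part contributes $\mu\int_0^\infty\mE_r[e^{-r_s}]\md s=\mu\int_0^\infty M(r,s)\md s$ by Fubini's theorem, so that $V^{C^{\max}}(r,x)=H(r,x)$. Together with the upper bound this identifies $H$ as the value function and $C^{\max}$ as an optimal strategy.
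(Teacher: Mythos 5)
Your proposal is correct and follows essentially the same verification route the paper intends: the paper explicitly skips the proof of Theorem \ref{thm1} and defers to the analogous argument for Theorem \ref{sec0:ver} (It\^o/change-of-variable formula, the HJB inequalities $\mathcal L(H)\le 0$ and $e^{-r}-H_x=0$, a martingale argument for the stochastic integral, then passage to the limit in the expectation). Your handling of the limit via localization, Fatou and monotone convergence is if anything slightly more careful than the paper's appeal to bounded/dominated convergence, and your part (ii) correctly identifies $V^{C^{\max}}=H$.
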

We skip the proof, as it goes similar to the proof of the verification theorem in the next subsection. 
%%%%%%%%%%%%%%%%%%%%%%%%%%%%%%%%%%%%%%%%%%%%%%%%%%%%%%%%%%%%%%%%%%%%%%%%%%%%%%%%%%%%%%%%%
\subsection{The case $a<\frac{\delta^2}2$}
In this case, $H(r,x)$ defined in \eqref{maxpayout} does not solve the HJB equation \eqref{hjb1} for $r>R$. 
\\For instance in \cite[p.\ 27]{hs} one finds that in order to solve an optimisation problem there are two ways: to show directly that the value function solves the HJB equation or to guess the optimal strategy and to prove that the corresponding return function solves the HJB equation. Here, we will follow the second method.

We conjecture that the optimal strategy is of a barrier type, i.e. there is a positive constant $\bar r\in\R_+$ such that it is optimal to wait if $r>\bar r$ and to immediately spend everything if $r\le \bar r$. Since we do not know how the optimal barrier should look like, we let $\bar r\in\R_+$ be arbitrary but fixed. The corresponding return function consists of two parts:
%In order to prove this claim, we define
\begin{align*}
&F(r,x):=xe^{-r}+\mu\mE_r\Big[\int_0^\tau e^{-r_s}\md s\Big]+\tilde F,\quad r\le \bar r
\\&G(r,x):=\mE_r\Big[\big(x+\mu \rho\big)\one_{[\rho<\infty]}\Big]e^{-\bar r}+\tilde F\mE_r\Big[\one_{[\rho<\infty]}\Big],\quad r> \bar r\;.
\end{align*}
where $\tilde F$ is some positive constant whose value should be determined later. 
It means that $F$ describes the spendings if the initial value $r_0=r\le \bar r$, and $G$ describes the waiting until $r_t$ approaches $\bar r$ or $\infty$.
\\By construction, $G(\bar r,x)=F(\bar r,x)$ and $G_x(\bar r,x)=F_x(\bar r,x)$ for all $x\in\R_+$.

The question is whether $G$ and $F$ given above solve the HJB equation \eqref{hjb1} on $[\bar r,\infty)$ and on $[0,\bar r]$ respectively and fulfil $G_r(\bar r, x) = F_r(\bar r, x)$ and $G_{rr}(\bar r, x) = F_{rr}(\bar r, x)$ for all $x\in\R_+$ with bounded derivatives $F_r$ and $G_r$. 
\subsubsection{Properties of $F$ and $G$}
In this subsection, we investigate the properties of functions $F$ and $G$.
Using notation \eqref{intro:def1}, we can rewrite $F$ as follows
\[
F(r,x)=xe^{-r}+\mu\psi_1(r)+\tilde F\;.
\] 
That is, inserting the function $F$ into the HJB equation \eqref{hjb1} we obtain on the one hand $e^{-r}-F_x=0$ and on the other hand, using Lemma \ref{intro:lem1}:
\begin{align*}
\mathcal{L}(F)(r,x)&=\mu e^{-r}+xe^{-r}\big(-ar-b+\frac{\delta^2r}2\big)+\mu(ar+b)\psi_1'+\mu\frac{\delta^2 r}2\psi_1''(r)
\\&=xe^{-r}\big(-ar-b+\frac{\delta^2r}2\big)\;.
\end{align*}
Thus, $F$ solves the HJB equation \eqref{hjb1} on the set $[0,\bar r]\times\R_+$, if $\bar r\le R$, defined in \eqref{R}.
\\Consider now the function $G$. Using Definitions \eqref{intro:def2} and \eqref{intro:def3}, $G$ can be rewritten as follows
\[
G(r,x)=x\phi_1(r)e^{-\bar r}+\mu \phi_2(r)e^{-\bar r}+\tilde F\phi_1(r)\;.
\]
We are going to find out under which conditions $G$ solves the HJB equation \eqref{hjb1} on the interval $[\bar r,\infty)$. Lemma \ref{intro:lem1} yields 
\begin{align*}
\mathcal L(G)(r,x)&=\mu\phi_1(r)e^{-\bar r}+\big(xe^{-\bar r}+\tilde F\big)\Big\{(ar+b)\phi_1'(r)+\frac{\delta^2r}2\phi_1''(r)\Big\}
\\&\quad {}+ \mu e^{-\bar r}\Big\{(ar+b)\phi_2'(r)+\frac{\delta^2r}2\phi_2''(r)\Big\}
\\&= \mu\phi_1(r)e^{-\bar r}-\mu \phi_1(r)e^{-\bar r}=0\;.
\end{align*}
Therefore, we have to consider $e^{-r}-G_x(r,x)$ and search for conditions supplying the relation $e^{-r}-G_x(r,x)\le 0$ on $[\bar r,\infty)\times \R_+$ in order for $G$ to solve the HJB equation \eqref{hjb1}. 
First, we prove the following auxiliary result:
\begin{Lem}\label{lem:phi1}
The function $\phi_1(r)$ is decreasing, and there is a unique $r^*\in[0,R]$ such that
\[
\phi_1(r^*)=-\phi_1'( r^*)=1\quad\mbox{and}\quad \phi_1(r)>-\phi_1'(r)\quad \mbox{for $r>r^*$}\;.
\]
\end{Lem}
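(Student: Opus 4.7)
The plan is to combine the explicit ODE solution for $\phi_1$ supplied by Lemma~\ref{intro:lem1} with monotonicity analysis of two auxiliary scalar functions. Equation \eqref{eq1} is first-order linear in $\phi_1'$, so separating variables gives $\phi_1'(r) = C\, h(r)$ with $h(r):=r^{-(q+1)}\,e^{-2ar/\delta^2}$ and $q+1=\tfrac{2b}{\delta^2}$. The integral $\int_{r^*}^\infty h(s)\md s$ is finite thanks to the exponential factor, and the boundary conditions $\phi_1(r^*)=1$, $\phi_1(\infty)=0$ force $C=-1/\int_{r^*}^\infty h(s)\md s<0$. Hence $\phi_1'(r)<0$ on $[r^*,\infty)$ and $\phi_1$ is strictly decreasing, with explicit formulas $\phi_1(r)=\int_r^\infty h(s)\md s\big/\int_{r^*}^\infty h(s)\md s$ and $-\phi_1'(r)=h(r)\big/\int_{r^*}^\infty h(s)\md s$.

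Next I address existence and uniqueness of the critical barrier. Since $\phi_1(r^*)=1$ holds automatically, the smooth-fit condition $-\phi_1'(r^*)=1$ rewrites as $P(r^*)=0$, where $P(r):=h(r)-\int_r^\infty h(s)\md s$. A direct computation gives $P'(r)=h'(r)+h(r)=h(r)\cdot (\delta^2-2a)(r-R)/(\delta^2 r)$, so $P$ is strictly decreasing on $(0,R)$ and strictly increasing on $(R,\infty)$. Together with $\lim_{r\to\infty}P(r)=0$ this forces $P(R)<0$ strictly. Analysing the behaviour as $r\to 0^+$ in the three subcases $2b<\delta^2$, $2b=\delta^2$ and $2b>\delta^2$ (the finiteness of $\int_0^\infty h(s)\md s$ differs) shows in each case $P(r)\to+\infty$; when the integral diverges, a short L'H\^opital argument on $h(r)\big/\int_r^\infty h(s)\md s$ settles the limit. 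The intermediate value theorem together with strict monotonicity of $P$ on $(0,R)$ then yields a unique zero $r^*\in(0,R)$, and monotonicity on $(R,\infty)$ with $P<0$ there rules out further zeros, giving the asserted uniqueness in $[0,R]$.

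For the final inequality, set $\xi(r):=\phi_1(r)+\phi_1'(r)$, so by the choice of $r^*$ one has $\xi(r^*)=1+\phi_1'(r^*)=0$. Eliminating $\phi_1''$ via \eqref{eq1} gives $\xi'(r)=\phi_1'(r)\cdot (\delta^2-2a)(r-R)/(\delta^2 r)$. Because $\phi_1'<0$ and $\delta^2-2a>0$, $\xi$ is strictly increasing on $(r^*,R)$ and strictly decreasing on $(R,\infty)$; since also $\xi(\infty)=0$, this yields $\xi(r)>0$ for every $r>r^*$, which is exactly $\phi_1(r)>-\phi_1'(r)$. The main obstacle is the boundary analysis of $P$ at $r\to 0^+$ in the second step (the three-case split on the sign of $q$); once past that point, steps one and three are essentially sign-tracking enabled by the explicit ODE solution.
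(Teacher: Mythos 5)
Your proof is correct, and it reaches the same conclusions from the same starting point (the explicit solution of \eqref{eq1}), but the mechanics differ from the paper's. The paper studies the barrier-independent ratio $\phi_1'(r)/\phi_1(r)$, derives a Riccati-type equation for it, and argues qualitatively: at any point $\hat r>R$ where the ratio equals $-1$ its derivative would be strictly negative, so the ratio could never return to its limit $-2a/\delta^2>-1$ at infinity; combined with the limits at $0$ and $\infty$ and the intermediate value theorem this gives a root in $(0,R]$. Your version replaces the ratio by the difference $P(r)=h(r)-\int_r^\infty h(s)\,\md s$ (equivalently $\xi=\phi_1+\phi_1'$, which equals $-P/\int_{r^*}^\infty h$), whose derivative has the clean closed form $h(r)(\delta^2-2a)(r-R)/(\delta^2 r)$; the resulting strict monotonicity on $(0,R)$ and $(R,\infty)$ delivers existence, uniqueness and the inequality $\phi_1>-\phi_1'$ on $(r>r^*)$ in one stroke. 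What your route buys is that the uniqueness assertion, which the paper's crossing argument leaves somewhat implicit (one still has to rule out a second crossing on $(0,R)$ via transversality), becomes an immediate consequence of monotonicity; you also make explicit the three-case boundary analysis at $r\to0^+$ that the paper compresses into the bare assertion $\lim_{r\to0}\phi_1'/\phi_1=-\infty$. The only mild redundancy is that your third step essentially re-proves $P<0$ on $(r^*,\infty)$, which already follows from your second step, but this is harmless.
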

\begin{proof}
For the proof confer the appendix, Section \ref{app}.
\end{proof}
The following Lemma considers the expression $e^{-r}-G_x(r,x)$ if the barrier is given by $r^*$ defined above.
\begin{Lem}
Let $\bar r = r^*$, defined in Lemma \ref{lem:phi1}. Then, for all $r> r^*$ the following inequality holds true:
\[
G_x(r,x)=\phi_1(r)e^{-r^*}> e^{-r} \;.
\]
\end{Lem}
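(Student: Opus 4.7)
The equation $G_x(r,x) = \phi_1(r) e^{-r^*}$ is immediate by differentiating the formula
\[
G(r,x)=x\phi_1(r)e^{-\bar r}+\mu \phi_2(r)e^{-\bar r}+\tilde F\phi_1(r)
\]
with respect to $x$ and substituting $\bar r = r^*$. The real content is therefore the inequality $\phi_1(r)e^{-r^*} > e^{-r}$ for $r > r^*$. My plan is to reformulate this as a strict monotonicity statement for a suitable auxiliary function and then invoke the pointwise bound $\phi_1(r) > -\phi_1'(r)$ from Lemma \ref{lem:phi1} to conclude.

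Concretely, I would set
\[
g(r) := \phi_1(r)\, e^{r},
\]
so that the desired inequality $\phi_1(r)e^{-r^*} > e^{-r}$ rewrites as $g(r) > e^{r^*}$ for $r > r^*$. At the endpoint one has $g(r^*) = \phi_1(r^*)\,e^{r^*} = e^{r^*}$ since $\phi_1(r^*)=1$ by Lemma \ref{lem:phi1}. Differentiation yields
\[
g'(r) = e^{r}\bigl(\phi_1(r) + \phi_1'(r)\bigr),
\]
and Lemma \ref{lem:phi1} says precisely that $\phi_1(r) > -\phi_1'(r)$ for every $r > r^*$. Hence $g'(r) > 0$ on $(r^*,\infty)$, so $g$ is strictly increasing there, which together with the boundary value $g(r^*)=e^{r^*}$ gives $g(r) > e^{r^*}$ for all $r > r^*$. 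Rearranging recovers the desired inequality.

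There is essentially no obstacle here: the whole argument is a one-line reduction once one chooses the right auxiliary function $g$. The only subtle point is the bookkeeping at the endpoint $r = r^*$, where equality holds (both sides coincide at $e^{-r^*}$); this is why the statement is restricted to $r > r^*$ and why the strict inequality in Lemma \ref{lem:phi1} on $(r^*, \infty)$ is exactly what is needed. No use of the explicit ODE for $\phi_1$ is required beyond what has already been encoded in Lemma \ref{lem:phi1}.
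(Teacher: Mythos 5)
Your argument is correct and is essentially identical to the paper's own proof: both show that $e^{r}\phi_1(r)$ is strictly increasing on $(r^*,\infty)$ using the inequality $\phi_1(r)>-\phi_1'(r)$ from Lemma \ref{lem:phi1}, and then compare with the value $e^{r^*}\phi_1(r^*)=e^{r^*}$ at the barrier. Nothing further is needed.
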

\begin{proof}
Deriving $e^r\phi_1(r)$ yields, using Lemma \ref{lem:phi1}:
$\big(e^r\phi_1(r)\big)'=e^r\phi_1(r)\Big(1+\frac{\phi_1'(r)}{\phi_1(r)}\Big)> 0$.
Then,
\[
e^{-r}-G_x(r,x)=e^{-r}e^{-r^*}\Big(e^{r^*}-e^r\phi_1(r)\Big)<e^{-r}e^{-r^*}\Big(e^{r^*}-e^{r^*}\phi_1(r^*)\Big)=0
\]
for $r>r^*$.
\end{proof}
We can conclude that $G$ solves the HJB \eqref{hjb1} on $[r^*,\infty)$ if $\bar r=r^*$.
%%%%%%%%%%%%%%%%%%%%%%%%%%%%%%%%%%%%%%%%%%%%%%%%%%%%%%%%%%%%%%%%%%%%%%%%%%%%%%%%%%%%%%%%%
\subsubsection{The optimal strategy and verification theorem}
From now on we assume $\bar r=r^*$, i.e. $\frac{\phi_1'(r^*)}{\phi_1(r^*)}=-1$. 
\\
Due to Lemma \ref{lem:phi1}, both functions, $F$ and $G$, solve the HJB equation \eqref{hjb1} on $[0,r^*]$ and on $[r^*,\infty)$ correspondingly.
Next, we have to look at the derivatives $G_r(r^*,x)$, $F_r(r^*,x)$ and $G_{rr}(r^*,x)$, $F_{rr}(r^*,x)$ in order to guarantee a smooth value function. It holds
\begin{align*}
G_r(r^*,x)&=xe^{-r^*}\phi_1'(r^*)+\mu \phi_2'(r^*)e^{-r^*}+\tilde F\phi_1'(r^*)
\\&=-xe^{-r^*}+\mu \phi_2'(r^*)e^{-r^*}-\tilde F,
\\F_r(r^*,x)&=-xe^{-r^*}+\mu\psi_1'(r^*)
\end{align*}
and
\begin{align*}
G_{rr}(r^*,x)&= xe^{-r^*}\phi_1''(r^*)+\mu \phi_2''(r^*)e^{-r^*}+\tilde F\phi_1''(r^*),
%\\&= xe^{-r^*}+\mu \phi_2''(r^*)e^{-r^*}+\tilde F
\\F_{rr}(r^*,x)&= xe^{-r^*}+\mu\psi_1''(r^*)\;.
\end{align*}
\begin{Bem}
Choosing 
\begin{equation}
F(r^*,0)=\tilde F=\mu \phi_2'(r^*)e^{-r^*}-\mu\psi_1'(r^*)\label{smoothfit}
\end{equation}
yields $G_r(r^*,x)=F_r(r^*,x)$ for all $x\in\R_+$. Note that it holds $\tilde F\ge 0$ due to the proof of Lemma \ref{intro:lem1}, confer appendix, Section \ref{app}.
\\Consider the differential equations \eqref{eq3} multiplied by $(-\mu)$; \eqref{eq1} multiplied by $\tilde F$ and \eqref{eq2} multiplied by $\mu e^{-r^*}$ at $r^*$:
\begin{align*}
&-\mu e^{-r^*}-(ar^*+b)\mu\psi_1'(r^*)-\frac{\delta^2r^*}2 \mu\psi_1''(r^*)=0,
\\&(ar^*+b)\tilde F\phi_1'(r^*)+\frac{\delta^2r^*}2 \tilde F\phi_1''(r^*)=0,
\\&e^{-r^*}(ar^*+b)\mu\phi_2'(r^*)+\mu e^{-r^*}\frac{\delta^2r^*}2 \phi_2''(r^*)+\mu e^{-r^*}\phi_1(r^*)=0.
\end{align*} 
Using $\phi_1(r^*)=-\phi_1'(r^*)=1$ and adding the above equations yields
\[
\begin{split}
(ar^*+b)\big\{\mu\phi_2'(r^*)e^{-r^*}-\mu\psi_1'(r^*)&-\tilde F\big\}\\&=-\frac{\delta^2r^*}2\big\{\mu\phi_2''(r^*)e^{-r^*}+\tilde F\phi_1''(r^*)-\mu\psi_1''(r^*)\big\}\;.
\end{split}
\]
Note that by definition of $\tilde F$, the lhs of the above equation equals zero, meaning \\$G_{rr}(r^*,0)=F_{rr}(r^*,0)$. However, in general it does not hold $\phi_1''(r^*)=1$. For that reason $G_{rr}(r^*,x)\neq F_{rr}(r^*,x)$ if $x\neq 0$.
\end{Bem}
We formulate the following verification theorem.
\begin{Thm}\label{sec0:ver}
The optimal strategy $C^*$ is to immediately spend any available amount bigger than zero if $r\le r^*$, i.e. $C^*_t= \big(x+\mu \lambda_{r^*}^t\big)\one_{[\lambda_{r^*}^t> 0]}$, where $\lambda_{r^*}^t:=\sup\{s\in[0,t):\; r_s\le r^*\}$ with $\sup\{\emptyset\}=0$. The value function $V(r,x)$ solves the HJB equation \eqref{hjb1} and fulfils $V(r,x) =v(r,x)$ with
\[
v(r,x)=\begin{cases}
F(r,x) & \mbox{if $(r,x)\in[0,r^*]\times \R_+$}
\\G(r,x)& \mbox{ if $(r,x)\in[r^*,\infty)\times \R_+$}
\end{cases}
\]
with $F(r^*,0)=\tilde F$ given in \eqref{smoothfit}.
\end{Thm}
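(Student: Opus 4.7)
The plan is the standard two-step verification argument for singular stochastic control. First I show $v(r,x)\ge V^C(r,x)$ for every admissible strategy $C\in\mathfrak A$. By construction $v$ is continuous and globally $C^1$, owing to the smooth-fit condition \eqref{smoothfit} that pins down $\tilde F$; it is $C^2$ in $r$ on each of the half-lines $[0,r^*)$ and $(r^*,\infty)$. Because the CIR diffusion coefficient is strictly positive at $r^*>0$, the set $\{s\le t:r_s=r^*\}$ has zero Lebesgue measure a.s.\ by the occupation time formula, and because $X^C$ is of finite variation, Ito's formula applies with $v_{rr}$ taken as any Lebesgue-a.e.\ representative:
\begin{equation*}
v(r_t,X_t^C)-v(r,x)=\int_0^t\mathcal L(v)(r_s,X_s^C)\md s-\int_0^t v_x(r_s,X_s^C)\md C_s^c+M_t+\Sigma_t,
\end{equation*}
where $M_t=\int_0^t v_r\delta\sqrt{r_s}\md W_s$ and $\Sigma_t=\sum_{s\le t}[v(r_s,X_s^C)-v(r_s,X_{s-}^C)]$. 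The computations preceding this theorem give $\mathcal L(v)\le 0$ on $\R_+^2$ (with $\mathcal L(G)=0$ on $(r^*,\infty)\times\R_+$ and $\mathcal L(F)\le 0$ on $[0,r^*]\times\R_+$ because $r^*\le R$), while $v_x\ge e^{-r}$ everywhere. Combined with the fundamental theorem of calculus, the latter bound gives $\Sigma_t\le -\sum_{s\le t}e^{-r_s}\Delta C_s$. Rearranging,
\begin{equation*}
v(r,x)\ge v(r_t,X_t^C)+\int_0^t e^{-r_s}\md C_s-M_t.
\end{equation*}
A standard localisation, together with the moment estimate of Lemma \ref{lem:endl} applied to the explicit forms of $F_r$ and $G_r$, shows that $M$ is a true martingale. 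Taking expectation, using $v\ge 0$ (immediate from $\tilde F\ge 0$ and the non-negativity of $\psi_1,\phi_1,\phi_2$), and letting $t\to\infty$ with monotone convergence on the right-hand side yields $v(r,x)\ge V^C(r,x)$.

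Second, I verify that every inequality becomes an equality along the candidate $C^*$. When $r_s>r^*$, $C^*$ is constant (so no contribution to either $\int v_x\md C^{*,c}$ or $\Sigma_t$) and $\mathcal L(G)=0$; when $r_s\le r^*$, $C^*$ instantly pushes $X_s^{C^*}$ to zero, and since $F$ is affine in $x$ with $F_x(r,x)=e^{-r}$, both the continuous-part integral and $\Sigma_t$ reduce exactly to $\int_0^t e^{-r_s}\md C_s^*$. Thus
\begin{equation*}
v(r,x)=\mE_{(r,x)}\big[v(r_t,X_t^{C^*})\big]+\mE_{(r,x)}\Big[\int_0^t e^{-r_s}\md C_s^*\Big].
\end{equation*}
It remains to prove $\mE_{(r,x)}[v(r_t,X_t^{C^*})]\to 0$ as $t\to\infty$. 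Along $C^*$ one has $X_t^{C^*}\le\mu(t-\lambda_{r^*}^t)\le\mu t$; by Lemma \ref{app:lem1} $r_t\to\infty$ a.s., and $\phi_1,\phi_2$ vanish at infinity, so $v(r_t,X_t^{C^*})\to 0$ a.s. Dominated convergence, using the exponential decay of $\phi_1(r)$ in $r$ against the at most linear growth of $X^{C^*}$, delivers convergence in mean and completes the proof.

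The main obstacle is analytical rather than conceptual: justifying Ito's formula across the $C^1$-but-not-$C^2$ boundary $\{r=r^*\}$, which is nontrivial in view of the Remark showing $G_{rr}(r^*,x)\neq F_{rr}(r^*,x)$ for $x\neq 0$. The resolution is that $v$ is $C^1$ with $v_r$ absolutely continuous in $r$, and the occupation measure of the single level $r^*$ under the CIR diffusion vanishes, so no local-time correction enters. The remaining technical issues---$L^2$-integrability of $v_r\delta\sqrt{r_s}$ on compact intervals and $L^1$-vanishing of the terminal term---are exactly of the type for which Lemma \ref{lem:endl} was established.
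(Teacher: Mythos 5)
Your proposal is correct and follows essentially the same route as the paper's proof: a verification argument based on a generalized It\^o/change-of-variable formula across the $C^1$ smooth-fit boundary $r=r^*$ (the paper cites Peskir's formula, you argue via the vanishing occupation measure of the level $r^*$ --- equivalent justifications since $F_r(r^*,\cdot)=G_r(r^*,\cdot)$), the HJB inequalities $\mathcal L(v)\le 0$ and $v_x\ge e^{-r}$, and the martingale property of the stochastic integral via Lemma \ref{lem:endl}. You are in fact somewhat more careful than the paper at two points --- the explicit jump decomposition $\Sigma_t$ bounded through $v_x\ge e^{-r}$ and the fundamental theorem of calculus, and discarding the terminal term by $v\ge 0$ plus monotone convergence instead of the paper's looser appeal to boundedness of $v$ and dominated convergence --- but these are refinements of the same argument, not a different one.
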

\begin{proof}
Note that it holds
\[
\{\lambda_{r^*}^t\le u\}=\big\{\sup\{s\in[0,t):\; r_s\le r^*\}\le u\big\}=\big \{\inf\limits_{u<  s\le t}r_s> r^*\big\}
\]
Because the running infimum above is $\mF_t$-measurable, we can conclude that the strategy $C^*$ defined above is an admissible strategy. 
\\Since $F\in \mathcal C^{1,2}((0,r^*)\times\R_+)$, $G\in \mathcal C^{1,2}((r^*,\infty)\times\R_+)$, $F(r^*,x)=G(r^*,x)$, $F_r(r^*,x)=G_r(r^*,x)$ and $\one_{[r_t=r*]}=1$ a.s., we can apply the change-of-variable formula due to \cite{peskir}. Let $C$ be an arbitrary admissible strategy and $\hat X$ the ex-consumption process under $C$. Then
\begin{align*}
v(r_t,\hat X_t)&=v(r,x)+\int_0^t \mathcal L(v)(r_s,\hat X_s)\md s+\int_0^t \delta\sqrt{r_s}\;v_r(r_s,\hat X_s)\md W_s
-\int_0^t v_x(r_s,\hat X_s) \md C_s.
\end{align*}
Further, we know that $v$ solves the HJB equation \eqref{hjb1}, meaning $\mathcal L(v)(r,x)\le 0$ and $e^{-r}-v_x(r,x)\le 0$ for all $(r,x)\in \R_+^2$. Therefore,
\begin{align*}
v(r_t,\hat X_t)\le v(r,x)+ \int_0^t \delta\sqrt{r_s}\;v_r(r_s,\hat X_s)\md W_s-\int_0^t e^{-r_s} \md C_s.
\end{align*}
The stochastic integral above is a martingale with expectation zero, confer the proof of Lemma \ref{intro:lem1} in Section \ref{app}. Taking the expectations on the both sides of the above equality, one gets
\[
\mE_{(r,x)}\big[v(r_t,\hat X_t)\big]\le v(r,x)-\mE_{(r,x)}\Big[\int_0^t e^{-r_s} \md C_s\Big].
\]
Because $\lim\limits_{r\to\infty}v(r,x)=0$ and $v(r,x)$ is bounded, by dominated convergence we can interchange limit and integration and obtain
\[
v(r,x)\ge \mE_{(r,x)}\Big[\int_0^\infty e^{-r_s} \md C_s\Big]
\]
Taking the strategy $C^*$ yields equality.
\end{proof}
Thus, the barrier strategy with barrier given by $r^*$, defined in Lemma \ref{lem:phi1}, is optimal. The corresponding return function is the value function and solves the HJB equation \eqref{hjb1}.
\begin{figure}[t]
\includegraphics[scale=0.4, bb = 0 0 200 400]{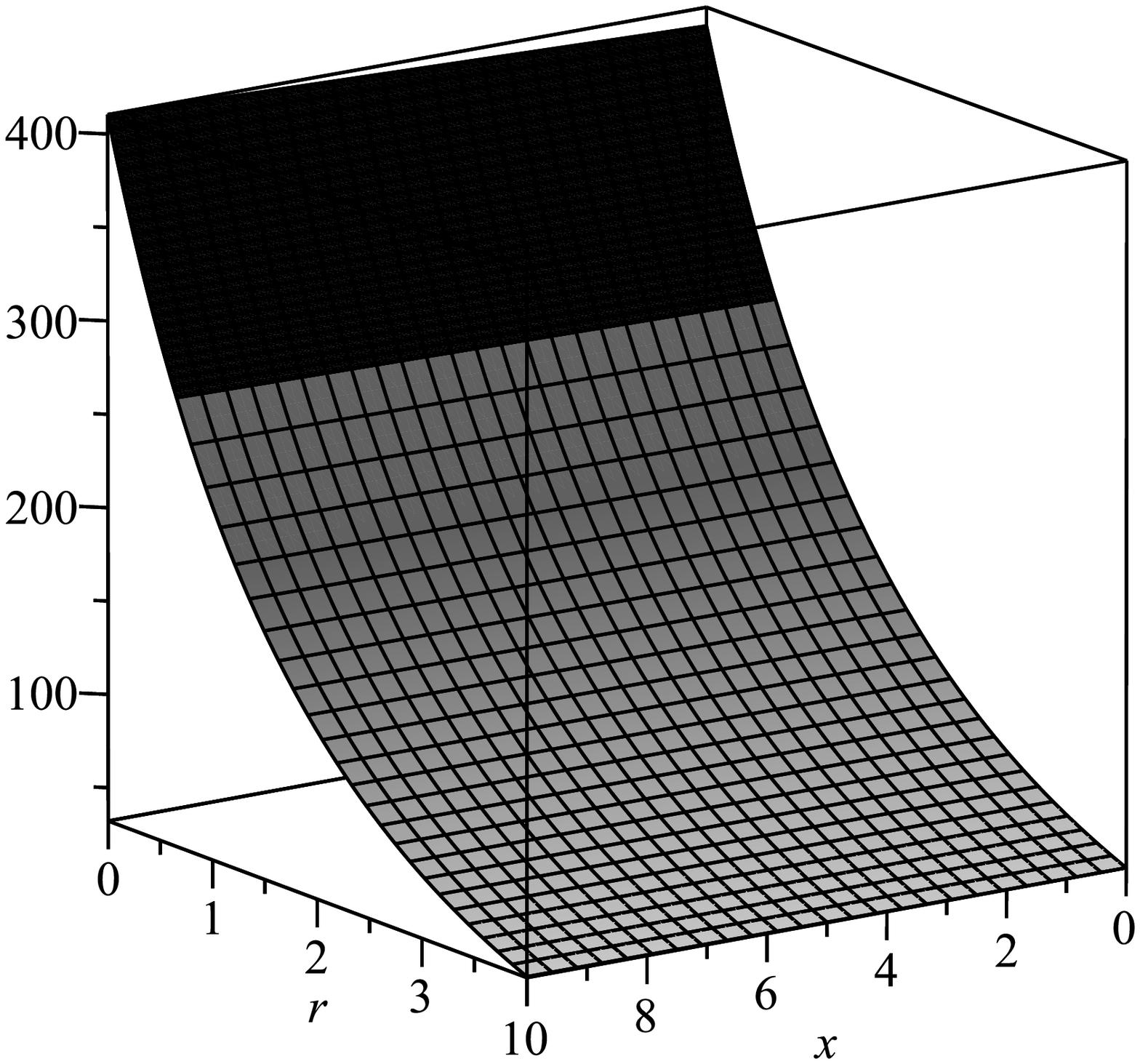}
\includegraphics[scale=0.33, bb = -450 -50 200 400]{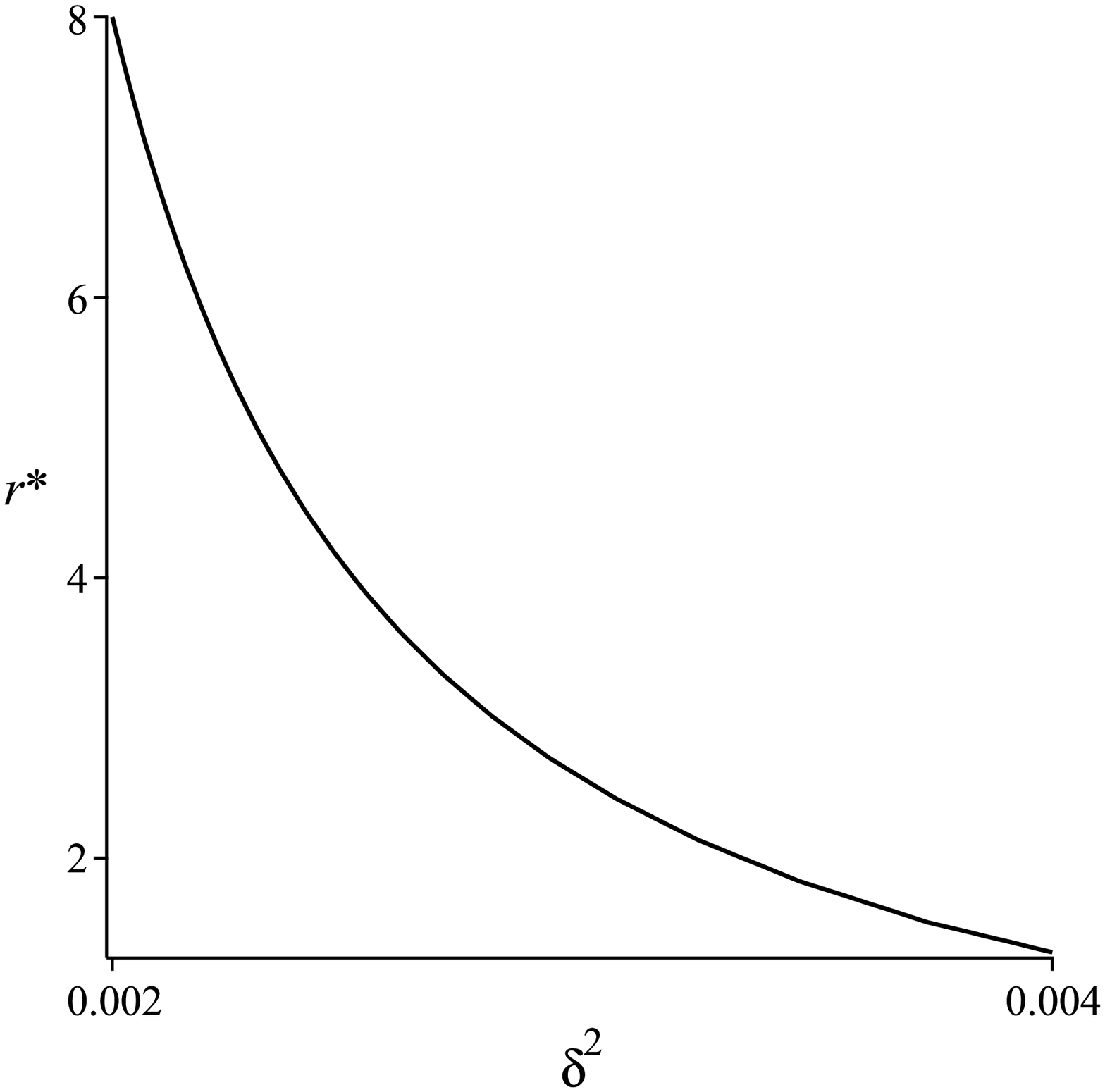}
\caption{Lhs: The value function $V(r,x)$, consisting of $F(r,x)$ (black) and $G(r,x)$ (grey). Rhs: Dependence of the barrier $r^*$ on $\delta^2$.\label{fig2}}
\end{figure}
\begin{Bsp}
We can calculate the value function explicitly:
\begin{align*}
&\psi_1(r)=\frac 2{\delta^2}\int_x^{r^*}y^{-\frac{2b}{\delta^2}}e^{-\frac{2a}{\delta^2}y}\int_0^ye^{-z}z^{\frac{2b}{\delta^2}-1}e^{\frac{2a}{\delta^2}z}\md z\md y,
\\&\phi_1(r)=\frac1{\int_{r^*}^\infty y^{-\frac{2b}{\delta^2}}e^{-\frac{2a}{\delta^2}y}\md y}\int_{r}^\infty y^{-\frac{2b}{\delta^2}}e^{-\frac{2a}{\delta^2}y}\md y,
\\&\phi_2(r)=\frac 2{\delta^2}\frac{\int_{r^*}^\infty y^{-\frac{2b}{\delta^2}}e^{-\frac{2a}{\delta^2}y}\int_{r^*}^y\phi_1(z)z^{\frac{2b}{\delta^2}-1}e^{\frac{2a}{\delta^2}z}\md z\md y}{\int_{r^*}^\infty y^{-\frac{2b}{\delta^2}}e^{-\frac{2a}{\delta^2}y}\md y}\int_{r^*}^r y^{-\frac{2b}{\delta^2}}e^{-\frac{2a}{\delta^2}y}\md y
\\&{}-\frac 2{\delta^2}\int_{r^*}^r y^{-\frac{2b}{\delta^2}}e^{-\frac{2a}{\delta^2}y}\int_{r^*}^y\phi_1(z)z^{\frac{2b}{\delta^2}-1}e^{\frac{2a}{\delta^2}z}\md z\md y.
\end{align*}
For well-definiteness of $\phi_1$ and $\phi_2$ confer the proof of Lemma 1.7 in the appendix, Section \ref{app}.
Let $a=0.001$, $b=0.002$, $\delta=0.07$ and $\mu=0.5$. 
The both functions, $F$ and $G$ are illustrated in Figure \ref{fig2}. 
\end{Bsp}
\subsection{The $0$-barrier strategy}
In the following, we are going to discuss a very special strategy for the case $2b<\delta^2$: ``spending just if the underlying CIR hits zero''. In fact, we know from \cite{cir} that if $2b<\delta^2$, a CIR process can attain zero with a positive probability. With growing volatility, the probability to ``dive'' and touch zero increases.

In the previous subsection, we have shown that the value function solves the HJB equation \eqref{hjb1} and that the optimal strategy is a barrier strategy with a constant barrier $r^*$ fulfilling $r^*\le R$, with $R$ given in \eqref{R}. By the structure of $R$, it holds $R\to0$ as $\delta\to\infty$. 
This brings up the question whether the optimal barrier could be equal to zero for some big values of $\delta$. 
\\Let $\delta<\infty$ be fixed and fulfil $\delta^2>2\max\{a,b\}$, meaning that the probability for $\{r_t\}$ to hit zero is positive. Due to subsection \ref{sub:1} the function $H$ defined in \eqref{maxpayout} is not the value function.
\\Consider the strategy $C^0$ with $C^0_t:=\big(x+\mu \lambda_0^t\big)\one_{[\lambda_0^t>0]}$, $\lambda_0^t=\sup\{s\in[0,t): r_s=0\}$ with $\sup\{\emptyset\}=0$ ``$\rightsquigarrow$ to spend the maximal possible amount only if $r_t=0$''.
\\
Letting again
\[
\rho_0:=\inf\{t \ge 0:\; r_t=0,\; r_0>0\},
\]
we define $\hat\phi_1(r):=\mE_r[\rho_0<\infty]$ and $\hat\phi_2(r):=\mE_r[\rho_0\one_{[\rho_0<\infty]}]$.
\begin{Lem}
The functions $\hat\phi_1(r)$ and $\hat\phi_2(r)$ solve the differential equations \eqref{eq1} and \eqref{eq2} on $(0,\infty)$ correspondingly. The function $\phi_2(r)$ is finite on $(0,\infty)$ and it holds
\begin{align*}
\hat\phi_1(r)&=\frac{1}{\int_{0}^\infty y^{-\frac{2b}{\delta^2}} e^{-\frac{2a}{\delta^2}y}\md y}\int_x^\infty y^{-\frac{2b}{\delta^2}} e^{-\frac{2a}{\delta^2}y}\md y,
\\\hat\phi_2(r)&=\frac 2{\delta^2}\frac{\int_{0}^\infty y^{-\frac{2b}{\delta^2}}e^{-\frac{2a}{\delta^2}y}\int_{0}^y\phi_1(z)z^{\frac{2b}{\delta^2}-1}e^{\frac{2a}{\delta^2}z}\md z\md y}{\int_{0}^\infty y^{-\frac{2b}{\delta^2}}e^{-\frac{2a}{\delta^2}y}\md y}\int_{0}^r y^{-\frac{2b}{\delta^2}}e^{-\frac{2a}{\delta^2}y}\md y
\\&{}-\frac 2{\delta^2}\int_{0}^r y^{-\frac{2b}{\delta^2}}e^{-\frac{2a}{\delta^2}y}\int_{0}^y\phi_1(z)z^{\frac{2b}{\delta^2}-1}e^{\frac{2a}{\delta^2}z}\md z\md y
\end{align*}
\end{Lem}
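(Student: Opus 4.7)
The plan is to mirror the appendix proof of Lemma \ref{intro:lem1} with barrier $r^{*}$ set to zero: derive the ODEs from the infinitesimal generator, solve them explicitly using the integrating factor dictated by \eqref{eq1}, and check the integrability claim via Lemma \ref{lem:endl}.

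For the ODEs I would apply It\^o to a $\mathcal{C}^{2}$ candidate $g$ stopped at $\rho_0\w T$, use Lemma \ref{app:lem1} to pass to $T\to\infty$, and identify $\hat\phi_1$ via the boundary conditions $g(0^{+})=1$, $g(\infty)=0$. For $\hat\phi_2$ I would first establish, by Fubini and the strong Markov property applied at time $t$ inside $\mP_{r}[\rho_0>t,\rho_0<\infty]=\mE_{r}[\one_{[\rho_0>t]}\hat\phi_1(r_t)]$, the representation
\[
\hat\phi_2(r)=\mE_{r}\Big[\int_{0}^{\rho_0}\hat\phi_1(r_s)\md s\Big];
\]
Dynkin's formula then forces the analogue of \eqref{eq2} with $\hat\phi_1$ in place of $\phi_1$, subject to $\hat\phi_2(0^{+})=\hat\phi_2(\infty)=0$. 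The homogeneous equation \eqref{eq1} is first order in $g'$, giving $g'(r)=K\,r^{-2b/\delta^{2}}e^{-2ar/\delta^{2}}$ and two linearly independent solutions $1$ and $u(r):=\int_{0}^{r}y^{-2b/\delta^{2}}e^{-2ay/\delta^{2}}\md y$. Under $2b<\delta^{2}$, Lemma \ref{lem:endl} guarantees $I:=u(\infty)<\infty$, pinning down $\hat\phi_1$ by matching endpoints (the ``$x$'' in the displayed formula is evidently a typo for ``$r$''). For $\hat\phi_2$ I apply variation of parameters in the basis $\{1,u\}$: the Wronskian is $u'(r)$ and the source is $-\tfrac{2}{\delta^{2}r}\hat\phi_1(r)$; fixing the two constants so that $\hat\phi_2(0^{+})=\hat\phi_2(\infty)=0$ produces the displayed double integral.

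The main obstacle is the finiteness of
\[
I_{2}:=\int_{0}^{\infty}y^{-2b/\delta^{2}}e^{-2ay/\delta^{2}}\int_{0}^{y}\hat\phi_1(z)\,z^{2b/\delta^{2}-1}e^{2az/\delta^{2}}\md z\,\md y.
\]
Near $z=0$ the exponent $2b/\delta^{2}-1>-1$ combined with boundedness of $\hat\phi_1$ gives an integrable singularity. The delicate zone is $z\to\infty$, where $e^{2az/\delta^{2}}$ grows exponentially: here one uses the Laplace asymptotic $\int_{z}^{\infty}w^{-2b/\delta^{2}}e^{-2aw/\delta^{2}}\md w\sim\tfrac{\delta^{2}}{2a}z^{-2b/\delta^{2}}e^{-2az/\delta^{2}}$ to conclude $\hat\phi_1(z)z^{2b/\delta^{2}-1}e^{2az/\delta^{2}}=O(1/z)$, so the inner integral is $O(\log y)$ and the outer exponential weight absorbs it. The same asymptotic yields $\hat\phi_2(r)\to 0$ as $r\to\infty$, closing the boundary-condition verification; pointwise finiteness of $\hat\phi_2$ on $(0,\infty)$ then follows from the explicit formula restricted to $[0,r]$.
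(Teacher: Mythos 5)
Your proposal is correct and follows essentially the same route as the paper: the paper's own proof of this lemma is a two-line reduction to Lemma \ref{lem:endl} and to the $r^*=0$, $2b<\delta^2$ case of Lemma \ref{intro:lem1}, whose appendix proof (It\^o/Dynkin applied to the explicit ODE solutions built from the fundamental system $\{1,\int_0^r y^{-2b/\delta^2}e^{-2ay/\delta^2}\md y\}$, martingale control of the stochastic integral near $0$ via the second integral in Lemma \ref{lem:endl}, and the Fubini identity $\mE_r\big[\int_0^{\rho\w t}\phi_1(r_s)\md s\big]=\mE_r\big[\one_{[\rho<\infty]}\,\rho\w t\big]$) is precisely what you reconstruct; you are also right that the lower limit $x$ in the displayed formula for $\hat\phi_1$ is a typo for $r$. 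The only genuinely different detail is the finiteness of the double integral, where you use the Laplace asymptotic of the incomplete-Gamma tail to show the inner integrand is $O(1/z)$ and the inner integral is $O(\log y)$, while the paper integrates by parts to bound the inner integral by $\tfrac{\delta^2}{2b}\big(y^{2b/\delta^2}e^{2ay/\delta^2}\phi_1(y)+y\big)$; both bounds are then absorbed by the outer exponential weight, so either argument closes the proof.
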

\begin{proof}
Due to Lemma \ref{lem:endl} it holds $\int_0^\infty y^{-\frac{2b}{\delta^2}} e^{-\frac{2a}{\delta^2}y}\md y<\infty$.
Due to Lemma \ref{intro:lem1}, the functions $\hat\phi_1(r)=\mE_r[\rho<\infty]$ and $\hat\phi_2(r)=\mE_r[\rho\one_{[\rho<\infty]}]$ solve the differential equations \eqref{eq1} and \eqref{eq2} on $(0,\infty)$ correspondingly and the function $\phi_2$ is finite.
\end{proof} 
Define further
\[
\lambda_0 := \sup\{t \ge 0:\; r_t=0\}\;,
\]
i.e. $\lambda_0$ is the last exit time from zero before the $\{r_t\}$ approaches $\infty$. It is clear that using $C^0$, in case $r_0=0$ one spends everything immediately, saves money until $\{r_t\}$ approaches zero for the next time and spends everything there. The game ends at time $\lambda_0$ defined above. 
\begin{Lem}\label{sec0:lastexit}
Let $\frac{\delta^2}2>b$, then
\[
\lambda_0<\infty \quad\mbox{a.s.} \quad\mbox{and}\quad \mE_0[\lambda_0]=\int_0^\infty t \frac{\big(e^{at}-1\big)^{-\frac{2b}{\delta^2}}}{\int_0^\infty \big(e^{az}-1\big)^{-\frac{2b}{\delta^2}}\md z}\md t <\infty\;.
\]
\end{Lem}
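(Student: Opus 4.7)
The a.s.\ finiteness of $\lambda_0$ is essentially free from Lemma \ref{app:lem1}: since $r_t\to\infty$ a.s., for almost every path there is a (random) time $T<\infty$ beyond which $r_t\ge 1$, so $r_t\ne 0$ for all $t\ge T$ and hence $\lambda_0\le T<\infty$ a.s. The real task is to identify the distribution of $\lambda_0$ under $\mP_0$; the plan is to show that its density is exactly $(e^{at}-1)^{-p}/\!\int_0^\infty(e^{az}-1)^{-p}\md z$ (with $p:=2b/\delta^2\in(0,1)$), from which the expectation formula follows by $\mE_0[\lambda_0]=\int_0^\infty t f_{\lambda_0}(t)\md t$.

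The starting point is the strong Markov property at time $t$, which gives
\[
\mP_0[\lambda_0\le t]=\mP_0[r_s>0\text{ for all } s>t]=\mE_0\big[1-\hat\phi_1(r_t)\big].
\]
To evaluate the right-hand side I need the transition density of $r_t$ under $\mP_0$. Letting $r\downarrow 0$ in formula \eqref{density} and using the asymptotic $I_q(x)\sim (x/2)^q/\Gamma(q+1)$ as $x\downarrow 0$ (valid because $2b<\delta^2$ forces $-1<q<0$), the factor $(v/u)^{q/2}I_q(2\sqrt{uv})$ converges to $v^q/\Gamma(q+1)$ and the law of $r_t$ under $\mP_0$ is seen to be Gamma with shape $p$ and rate $c(t)$, with density $f_t(y)=c(t)^p y^{p-1}e^{-c(t)y}/\Gamma(p)$.

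Substituting the closed form of $\hat\phi_1$, swapping integration order by Tonelli, and writing $\alpha:=2a/\delta^2$ and $D:=\int_0^\infty z^{-p}e^{-\alpha z}\md z=\alpha^{p-1}\Gamma(1-p)$, one obtains
\[
\mP_0[\lambda_0\le t]=\frac{1}{D\,\Gamma(p)}\int_0^\infty z^{-p}e^{-\alpha z}\,\Gamma(p,c(t)z)\md z .
\]
Differentiating in $t$ via $\frac{d}{dx}\Gamma(p,x)=-x^{p-1}e^{-x}$ and using the elementary identities $\alpha+c(t)=\alpha e^{at}/(e^{at}-1)$ and $-c'(t)/(\alpha+c(t))=a/(e^{at}-1)$, the inner integral collapses to $1/(\alpha+c(t))$ and everything simplifies to
\[
\frac{d}{dt}\mP_0[\lambda_0\le t]=\frac{a\,(e^{at}-1)^{-p}}{\Gamma(p)\Gamma(1-p)} .
\]
The substitutions $u=e^{az}$ followed by $v=u-1$ identify $\int_0^\infty(e^{az}-1)^{-p}\md z=\frac{1}{a}B(1-p,p)=\Gamma(p)\Gamma(1-p)/a$ as a Beta integral, which is exactly the reciprocal of the normalising constant above; hence the density of $\lambda_0$ under $\mP_0$ is the one displayed in the statement. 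Multiplying by $t$ and integrating gives the required formula for $\mE_0[\lambda_0]$; its finiteness uses $p<1$, since the integrand is $\sim a^{-p}t^{1-p}$ near $0$ (integrable because $1-p>0$) and $\sim te^{-apt}$ at infinity.

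The only genuinely delicate step is the limit $r\downarrow 0$ in \eqref{density}: the factor $(v/u)^{q/2}$ is singular at $u=0$, but $I_q(2\sqrt{uv})$ degenerates at exactly the compensating rate, and this cancellation produces an honest probability density precisely because $-1<q<0$, i.e.\ $2b<\delta^2$ — without this hypothesis the whole computation collapses. Once the Gamma law is in hand, the Fubini and differentiation-under-the-integral manoeuvres are routine and are justified by the integrability of $y^{-p}e^{-\alpha y}$ supplied by Lemma \ref{lem:endl}.
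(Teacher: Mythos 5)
Your proof is correct, but it follows a genuinely different route from the paper's. The paper invokes the last-exit-time decomposition from Borodin and Salminen, expressing $\mP_r[0<\lambda_0\le t]$ through the transition density with respect to the speed measure and the Green function $G_0(0,0)$, then converts to Lebesgue densities and lets $y\to 0$ in the Bessel series. You instead use the elementary identity $\mP_0[\lambda_0\le t]=\mE_0\big[1-\hat\phi_1(r_t)\big]$ (this is the ordinary Markov property at the deterministic time $t$, not the strong one, but that is harmless since $\mP_0[r_t=0]=0$ for $t>0$), combined with the closed form of $\hat\phi_1$ already established in Lemma \ref{intro:lem1} and the Gamma$(2b/\delta^2,\,c(t))$ law of $r_t$ under $\mP_0$; the latter is extracted from \eqref{density} by the same Bessel asymptotics the paper uses, so that ingredient is shared. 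What your approach buys is self-containedness --- no speed measures or Green functions, only objects the paper has already computed --- plus, as a bonus, the exact evaluation $\int_0^\infty(e^{az}-1)^{-p}\md z=\Gamma(p)\Gamma(1-p)/a$ via the Beta integral, where the paper only proves finiteness by a bound. What the paper's route buys is generality: the Borodin--Salminen formula gives the law of the last exit time from any level $y$ for any starting point $r$ in one stroke, whereas your argument is tailored to $y=0$ and $r=0$, where $1-\hat\phi_1$ and the entrance law happen to be explicit. Your direct derivation of $\lambda_0<\infty$ a.s.\ from Lemma \ref{app:lem1} is also fine and arguably cleaner than citing the handbook. The interchanges you call routine (Tonelli, differentiation under the integral on compact $t$-intervals of $(0,\infty)$ with dominating function a multiple of $e^{-\alpha z}$) are indeed justified by the integrability from Lemma \ref{lem:endl}.
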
 
\begin{proof}
For the proof confer the appendix, Section \ref{app}.
\end{proof}
Now, we can write down the return function corresponding to the strategy $C^0$:
\[
V^0(r,x):=\mE_r\big[\big(x+\mu \rho_0+\tilde V^0\big)\one_{[\rho_0<\infty]}\big]=(x+\tilde V^0)\hat\phi_1(r)+\mu\hat\phi_2(r)\;,
\]
where $\tilde V^0=V^0(0,0)=\mu\mE_0[\lambda_0]$.
\begin{Satz}
The strategy $C^0$ cannot be optimal for any $\delta<\infty$. 
\end{Satz}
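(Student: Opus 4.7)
The plan is to exhibit an admissible strategy that strictly dominates $C^0$ at some initial state $(r,x)$. First I would observe, from the displayed expression for $V^0$, that
\[
V^0(r,x) = x\hat{\phi}_1(r) + V^0(r,0),
\]
so that the marginal value in the second argument is $V^0_x(r,x)=\hat{\phi}_1(r)$. Then I would compare against the modified strategy $\tilde{C}$ which pays out the entire initial surplus $x$ as a lump sum at time $0$ and thereafter applies $C^0$ from state $(r,0)$. This $\tilde{C}$ is admissible since $C_0\ge 0$ is permitted, and its return function equals $x e^{-r} + V^0(r,0)$. The improvement of $\tilde{C}$ over $C^0$ is therefore
\[
V^{\tilde C}(r,x)-V^0(r,x) = x\bigl(e^{-r}-\hat{\phi}_1(r)\bigr).
\]

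The core of the argument is to show $\hat{\phi}_1(r)<e^{-r}$ for all sufficiently small $r>0$. The two functions coincide at $r=0$, since $\hat{\phi}_1(0)=1=e^{0}$, so it suffices to compare derivatives from the right. Differentiating the closed form for $\hat{\phi}_1$ gives
\[
\hat{\phi}_1'(r) = -\frac{r^{-2b/\delta^2}\,e^{-2ar/\delta^2}}{\int_0^\infty y^{-2b/\delta^2}\,e^{-2ay/\delta^2}\md y}.
\]
Under the standing assumption $\delta^2>2\max\{a,b\}$, Lemma \ref{lem:endl} makes the denominator a finite positive constant, while the numerator tends to $+\infty$ as $r\downarrow 0$ because $2b/\delta^2>0$. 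Hence $\hat{\phi}_1'(0^+)=-\infty$, whereas the derivative of $e^{-r}$ at $r=0$ equals $-1$. Setting $f(r):=\hat{\phi}_1(r)-e^{-r}$, we have $f(0)=0$ and $f'(0^+)=-\infty$, which forces $f(r)<0$ on some right neighbourhood of $0$.

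Combining these observations, for any $r>0$ small enough and any $x>0$ the strategy $\tilde C$ delivers a strictly higher expected discounted consumption than $C^0$, proving that $C^0$ cannot be optimal for any finite $\delta$ in the regime considered. The only non-routine ingredient is the blow-up of $\hat{\phi}_1'$ at zero, which however is immediate from the closed-form representation together with the integrability statement of Lemma \ref{lem:endl}; no further obstacles are anticipated.
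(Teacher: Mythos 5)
Your proposal is correct, and its core is identical to the paper's: both arguments rest on the observations $\hat\phi_1(0)=1$ and $\lim_{r\to 0^+}\hat\phi_1'(r)=-\infty$ (valid since $2b<\delta^2$ in this regime), which force $\hat\phi_1(r)<e^{-r}$ on a right neighbourhood of $0$. Where you diverge is in how the conclusion is drawn. The paper reads the inequality $e^{-r}-V^0_x(r,x)=e^{-r}-\hat\phi_1(r)>0$ as a violation of the HJB equation \eqref{hjb1} and then invokes the earlier verification result that the true value function does solve the HJB, so $V^0\neq V$. You instead unpack the economic content of that same inequality into an explicit dominating strategy: pay the initial surplus $x$ as a lump sum at time $0$ and then run $C^0$ from $(r,0)$, which by the affine structure $V^0(r,x)=x\hat\phi_1(r)+V^0(r,0)$ yields the improvement $x\bigl(e^{-r}-\hat\phi_1(r)\bigr)>0$ for small $r>0$ and $x>0$. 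Your route is more elementary and self-contained (it does not lean on the verification theorem of the preceding subsection), at the cost of a little extra bookkeeping to check that the modified strategy is admissible and that its return is indeed $xe^{-r}+V^0(r,0)$; the paper's route is shorter given the machinery already in place. Both are complete proofs.
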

\begin{proof}
Recall that the function $\hat\phi_1(r)$ is given by
\[
\hat\phi_1(r)=\frac{1}{\int_{0}^\infty y^{-\frac{2b}{\delta^2}} e^{-\frac{2a}{\delta^2}y}\md y}\int_x^\infty y^{-\frac{2b}{\delta^2}} e^{-\frac{2a}{\delta^2}y}\md y\;.
\]
This means in particular $\hat\phi_1(0)=1$ and $\lim\limits_{r\to0}\hat\phi_1'(r)=-\infty$ giving 
\[
e^{-r}-V^0_x(r,x)=e^{-r}-\phi_1(r)>0
\] 
for $r\in (0,\varepsilon)$ and some $\varepsilon>0$. This means that $V^0$ does not solve the HJB \eqref{hjb1} on $(0,\varepsilon)\times\R_+$. Since in the previous subsection we have shown that the value function solves the HJB, we can conclude that $C^0$ will never be optimal.
\end{proof}
%%%%%%%%%%%%%%%%%%%%%%%%%%%%%%%%%%%%%%%%%%%%%%%%%%%%%%%%%%%%%%%%%%%%%%%%%%%%%%%%%%%%%%

\subsection{The Brownian risk model}
\noindent
In this subsection, we add complexity to our model by assuming that the underlying surplus (previously called income) process is given by a Brownian motion with drift. Since it is unrealistic to assume strong random fluctuations in the income of an individual or household, we change the economic interpretation from maximising the consumption of an individual to the maximising of dividends of an insurance company.
The difference to the previous case comes up also in the fact that we stop our considerations when the surplus becomes negative (ruins). Taking into consideration the ruin time destroys the linear dependence of the value function on the surplus. In general, the return function corresponding to a constant barrier strategy will have a representation as a power series with non-linear functions as summands.
Therefore - using the chess terminology - in order to keep the problem in check, we assume for this subsection $\delta^2= 2a$.

Technically, we consider an insurance company whose surplus is given by a Brownian motion with drift $X_t=x+\mu t+\sigma B_t$, where $\{B_t\}$ is a standard Brownian motion and $\mu,\sigma>0$ are positive constants. The considered insurance company is allowed to pay out dividends, where the accumulated dividends until $t$ are given by $C_t$, yielding for the ex-dividend surplus $X^C$:
\[
X_t^C=x+\mu t+\sigma B_t-C_t\;.
\]
The consideration will be stopped at the ruin time $\tau^C$ of $X^C$. Let further $\{W_t\}$, the Brownian motion driving the discounting CIR process \eqref{intro:cir}, be independent of $\{B_t\}$, and the underlying filtration $\{\mF_t\}$ be the filtration generated by the pair $\{W_t,B_t\}$.  
We call a strategy $C$ admissible if $C_t$ is adapted to $\{\mF_t\}$, $C_0\ge 0$ and $X_t^C\ge 0$ for all $t\ge 0$, the set of admissible strategies will be denoted by $\mathfrak B$.
\\As a risk measure we consider the value of expected discounted dividends, where
the dividends are discounted by an CIR process \eqref{intro:cir}.

We define the return function corresponding to some admissible strategy $C$ to be $V^C(r,x)=\mE_{(r,x)}\Big[\int_0^{\tau^C} e^{-r_s}\md C_s\Big]$ and let
\begin{align*}
V(r,x)=\sup\limits_{C\in\mathfrak B}V^C(r,x)\;.
%\\&\lambda:=\frac{2a}{\delta^2}\;.
\end{align*}
The HJB equation corresponding to the problem can be derived similarly to \cite[pp.\ 98,103]{hs}:
\begin{align}
\max\big\{\mu V_x+\frac{\sigma^2}2V_{xx}+(ar+b) V_r+ ar V_{rr},e^{-r}-V_x\big\}=0\;.\label{hjb2}
\end{align} 
In this setup, we conjecture that the optimal strategy will be of a barrier type with a constant barrier for the surplus process. It means, we pay any capital bigger than the barrier, independent of $\{r_t\}$.
Define now the following auxiliary quantities:
\begin{equation*}
\begin{split}
&\theta:=\frac{-\mu+\sqrt{\mu^2 +2\sigma^2 b}}{\sigma^2},\quad \zeta:=\frac{-\mu-\sqrt{\mu^2 +2\sigma^2 b}}{\sigma^2},\quad \varrho:=\frac{\ln\big(b-\mu\zeta\big)-\ln\big(b-\mu\theta \big)}{\theta-\zeta}\;.
\end{split}
\end{equation*}
\begin{Lem}\label{lem:brow}
The return function $V^\varrho(r,x)$ corresponding to the constant barrier strategy $\varrho$ is given by 
\[
V^\varrho(r,x)=\begin{cases}F(r,x) &\mbox{: $x\ge \varrho$}\\
G(r,x) &\mbox{: $x\le \varrho$},
\end{cases}
\]
where
\begin{align*}
&F(r,x):=\big(x-\varrho+\frac\mu b\big)e^{-r},\quad\mbox{if $x\ge \varrho$}
\\&G(r,x):=e^{-r}\frac{e^{\theta x}-e^{\zeta x}}{\theta e^{\theta \varrho}-\zeta e^{\zeta \varrho}} ,\quad\mbox{if $x\le \varrho$}.
\end{align*}
The functions $F$ and $G$ fulfil:
\begin{itemize}
\item $F(r,\varrho)=G(r,\varrho)$, $F_r(r,\varrho)=G_r(r,\varrho)$, $F_{rr}(r,\varrho)=G_{rr}(r,\varrho)$;
\item $F_x(r,\varrho)=e^{-r}=G_x(r,\varrho)$ and $F_{xx}(r,\varrho)=0=G_{xx}(r,\varrho)$ for all $r\in\R_+$;
\item $G(r,x)$ solves the partial differential equation
\[
\mu f_x+\frac{\sigma^2}2f_{xx}+(ar+b) f_r+ ar f_{rr}=0
\]
and fulfils $G_x(r,x)\ge e^{-r}$ for all $r\in\R_+$ and $x\in[0,\varrho]$;
\end{itemize}
\end{Lem}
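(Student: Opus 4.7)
The crucial simplification comes from the assumption $\delta^2=2a$: a direct computation gives $(ar+b)\partial_r e^{-r}+ar\,\partial_{rr}e^{-r}=-be^{-r}$, so that on any ansatz of the product form $v(r,x)=e^{-r}\phi(x)$ the two-dimensional PDE stated in the third bullet collapses to the one-dimensional ODE
\[
\tfrac{\sigma^2}{2}\phi''+\mu\phi'-b\phi=0,
\]
whose characteristic polynomial has exactly $\theta$ and $\zeta$ as roots. Thus $(r,x)$ decouples and the problem reduces to a classical one-dimensional dividend problem with constant discount rate $b$.

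Building on this, I would first derive $G$. Under the barrier strategy no dividends are paid on $[0,\varrho)$, so $V^\varrho$ should solve the above PDE on $[0,\varrho)\times\R_+$ with the boundary conditions $V^\varrho(r,0)=0$ (ruin contributes nothing) and $V^\varrho_x(r,\varrho)=e^{-r}$ (smooth fit at the reflecting barrier). The product ansatz converts these into $\phi(0)=0$ and $\phi'(\varrho)=1$; the general solution $\phi=Ae^{\theta x}+Be^{\zeta x}$ then forces $B=-A$ and $A=1/(\theta e^{\theta\varrho}-\zeta e^{\zeta\varrho})$, which is exactly the claimed $G$. For $x\ge\varrho$ the barrier strategy pays $x-\varrho$ immediately and restarts at $\varrho$, giving $F(r,x)=(x-\varrho)e^{-r}+G(r,\varrho)$. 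The identity $G(r,\varrho)=(\mu/b)e^{-r}$, which turns this into $(x-\varrho+\mu/b)e^{-r}$, is a short algebraic consequence of Vieta's formulae $\theta+\zeta=-2\mu/\sigma^2$, $\theta\zeta=-2b/\sigma^2$ combined with the defining relation of $\varrho$: using $b-\mu\theta=\sigma^2\theta^2/2$ and $b-\mu\zeta=\sigma^2\zeta^2/2$ that relation rewrites as $\theta^2 e^{\theta\varrho}=\zeta^2 e^{\zeta\varrho}$, which is precisely what is needed.

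To upgrade this derivation into an identification $V^\varrho=v$, with $v$ equal to $F$ above and $G$ below the barrier, I would apply the change-of-variable formula of \cite{peskir} to $v(r_t,X_t^{C^\varrho})$ in the spirit of Theorem~\ref{sec0:ver}. The matching of values and first derivatives at $\varrho$ (each reducing by direct substitution to $\phi(\varrho)=\mu/b$ or $\phi'(\varrho)=1$) provides the regularity needed for the formula. On $[0,\varrho]$ the PDE annihilates the Lebesgue term, while above $\varrho$ the initial jump to $\varrho$ contributes the payment $(x-\varrho)e^{-r}$. Since $v_x(r,\varrho)=e^{-r}$ and $\md C^\varrho$ is supported on $\{X^{C^\varrho}=\varrho\}$, the pathwise term $\int_0^t v_x\,\md C^\varrho_s$ collapses to $\int_0^t e^{-r_s}\,\md C^\varrho_s$; a truncation argument (as used for Lemma~\ref{intro:lem1}, exploiting that $v_r$ is bounded on $[0,\varrho]\times K$ for any compact $K\subset\R_+$ and that $e^{-r_s}\le 1$) shows that the stochastic integrals are true martingales. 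Taking expectations and letting $t\to\tau^{C^\varrho}$ yields $v(r,x)=V^\varrho(r,x)$.

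Finally, for the remaining algebraic claims: $F_{xx}(r,\varrho)=0=G_{xx}(r,\varrho)$ is again the identity $\theta^2 e^{\theta\varrho}=\zeta^2 e^{\zeta\varrho}$, and the $F_{rr}$, $G_{rr}$ matching reduces once more to $\phi(\varrho)=\mu/b$. For the inequality $G_x(r,x)\ge e^{-r}$ on $[0,\varrho]\times\R_+$ it suffices to show $\phi'(x)\ge 1=\phi'(\varrho)$ there. The equation $\phi''(x)=0$ rewrites as $(\zeta-\theta)x=2\ln|\theta/\zeta|$, which admits only $x=\varrho$ as a solution, and $\phi''(0)=(\theta^2-\zeta^2)/K$ inherits the sign of $\theta+\zeta=-2\mu/\sigma^2<0$; hence $\phi''<0$ on $[0,\varrho)$, so $\phi'$ is strictly decreasing on $[0,\varrho]$ and $\phi'(x)>\phi'(\varrho)=1$ there. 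The principal technical obstacle throughout is the martingale justification in the verification step, since the CIR process is unbounded and $\tau^{C^\varrho}$ may be infinite; the remaining algebra is a direct unwinding of the definitions of $\theta$, $\zeta$ and $\varrho$.
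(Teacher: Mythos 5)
Your argument is correct, and it follows the same route the paper takes: the paper's ``proof'' is only the citation to Shreve et al., Lemma 2.1 and Corollary 2.2, relying on exactly the reduction you make explicit --- namely that for $\delta^2=2a$ the generator sends $e^{-r}$ to $-b\,e^{-r}$, so the product ansatz $e^{-r}\phi(x)$ collapses the problem to the classical constant-discount-rate dividend problem with rate $b$ (this is also what the remark after the proposition asserts). Your supporting computations check out: the defining relation of $\varrho$ is equivalent to $(b-\mu\theta)e^{\theta\varrho}=(b-\mu\zeta)e^{\zeta\varrho}$, hence via the characteristic equation to $\theta^2e^{\theta\varrho}=\zeta^2e^{\zeta\varrho}$, which simultaneously gives $\phi(\varrho)=\mu/b$, the second-derivative matching, and $\phi''<0$ on $[0,\varrho)$ (so $\phi'\ge 1$ there); and the martingale justification is unproblematic because after the initial lump-sum payment the surplus lives in $[0,\varrho]$, where $v$, $v_r$ and $v_x$ are bounded. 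The only thing you add beyond the paper is the verification/identification step itself, which the paper leaves entirely to the reference.
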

\begin{proof}
For the proof confer \cite{shreve}, Lemma 2.1 and Corollary 2.2.
\end{proof}
\begin{Satz}
The optimal dividend strategy $C^*$ is to pay any capital larger than $\varrho$, i.e. $$C_t^*=\max\Big\{\sup\limits_{0\le s\le \tau^*\w t}\Big(x+\mu s+\sigma W_s\Big)-\varrho;0\Big\},$$ where $\tau^*$ is the ruin time. The value function $V(r,x)$ is given by $F$ on $[\varrho,\infty)$, by $G$ on $[0,\varrho]$ and solves the HJB equation \eqref{hjb2}.
\end{Satz}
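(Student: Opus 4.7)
The plan is a standard verification argument: first I would show that the candidate
\[
v(r,x):=\begin{cases} F(r,x) & x\ge\varrho,\\ G(r,x) & 0\le x\le\varrho,\end{cases}
\]
is a supersolution of the HJB equation \eqref{hjb2}, then use an Ito-type expansion to deduce $v\ge V^C$ for every $C\in\mathfrak B$, and finally achieve equality by choosing $C=C^*$. On $[0,\varrho]\times\R_+$ Lemma \ref{lem:brow} already supplies both the PDE equality (the first term in the max is zero) and the pointwise bound $G_x\ge e^{-r}$. On $[\varrho,\infty)\times\R_+$ we have $F_x=e^{-r}$, $F_{xx}=0$, $F_r=-F$, $F_{rr}=F$, hence a direct substitution gives
\[
\mu F_x+\tfrac{\sigma^2}{2}F_{xx}+(ar+b)F_r+arF_{rr}=\mu e^{-r}-b\bigl(x-\varrho+\tfrac{\mu}{b}\bigr)e^{-r}=-b(x-\varrho)e^{-r}\le 0,
\]
with $F_x=e^{-r}$ forcing the max to be zero. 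The smooth fit $F=G$, $F_r=G_r$, $F_{rr}=G_{rr}$, $F_x=G_x$, $F_{xx}=G_{xx}$ at $x=\varrho$ stated in Lemma \ref{lem:brow} ensures $v\in\mathcal C^{2,2}$ across the free boundary, so classical Ito applies.

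Next I would fix an arbitrary admissible $C\in\mathfrak B$, denote by $X^C$ the ex-dividend process, decompose $C=C^c+C^d$ into its continuous and pure-jump parts, and apply Ito's formula to $v(r_s,X^C_s)$ up to $\tau^C\wedge t$. Since $W$ and $B$ are independent the cross-variation vanishes, and using $\delta^2=2a$ the HJB operator appearing in the drift is precisely the one in \eqref{hjb2}. The HJB inequalities give $\mathcal L_{\mathrm{HJB}}(v)\le 0$ and $v_x\ge e^{-r}$; the latter also bounds the jump term $v(r_s,X_s^C)-v(r_s,X_{s-}^C)$ from above by $-e^{-r_s}\Delta C_s$ via the mean value theorem. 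Taking expectations (the stochastic integrals in $W$ and $B$ are true martingales on $[0,\tau^C\wedge t]$ because $e^{-r_s}$ is bounded by $1$ and $v_r,v_x$ are bounded on the stochastic interval where $X_s^C\in[0,\varrho]$ or where $X_s^C$ is controlled by the admissibility), we obtain
\[
\mE_{(r,x)}\!\bigl[v(r_{\tau^C\w t},X^C_{\tau^C\w t})\bigr]\le v(r,x)-\mE_{(r,x)}\!\Big[\int_0^{\tau^C\w t}\!\! e^{-r_s}\md C_s\Big].
\]

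Letting $t\to\infty$ and using $\lim_{r\to\infty}v(r,x)=0$, $r_t\to\infty$ a.s.\ (Lemma \ref{app:lem1}), together with dominated convergence ($v\ge 0$ and $v(r,x)\le(x+\mu/b)$ on its natural domain) yields $v(r,x)\ge V^C(r,x)$. Equality is reached by $C=C^*$: under the Skorokhod reflection at level $\varrho$ the process $X^{C^*}$ lives in $[0,\varrho]$ after a possible initial jump from $x>\varrho$ down to $\varrho$, so $\mathcal L_{\mathrm{HJB}}(v)=0$ holds on the trajectory and $v_x=e^{-r}$ holds on the support of $\md C^*$ (the local time at $\varrho$); every inequality in the computation becomes an equality, and the initial jump contributes exactly $(x-\varrho)e^{-r}$, matching $F(r,x)-F(r,\varrho)$.

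The main obstacle is the martingale justification of the $dW$- and $dB$-integrals on $[0,\tau^C\wedge t]$ uniformly in the admissible strategy $C$: one must argue that $\sqrt{r_s}\,v_r(r_s,X_s^C)$ and $v_x(r_s,X_s^C)$ lie in the appropriate $L^2$-space, which is straightforward on the region $X_s^C\le\varrho$ but requires a localisation argument (for instance, stopping when $X^C$ exceeds an arbitrarily large level and then passing to the limit, using that an admissible $C$ is non-decreasing so that $X_s^C\le x+\mu s+\sigma B_s$ a.s.) on the complementary region.
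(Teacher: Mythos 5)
Your proposal is correct and is essentially the paper's own proof: the authors simply delegate the verification argument to Asmussen--Taksar \cite{astak} and Schmidli \cite[p.\ 104]{hs}, and what you have written out (supersolution check of \eqref{hjb2} using Lemma \ref{lem:brow}, It\^o with jump decomposition, localisation for the martingale terms, and equality under the reflected process at $\varrho$) is precisely that standard argument adapted to the extra $r$-coordinate. A minor simplification: for the inequality $v\ge V^C$ you can just drop the nonnegative term $\mE[v(r_{\tau^C\w t},X^C_{\tau^C\w t})]$ and use monotone convergence, reserving the dominated/bounded-convergence limit (where $X^{C^*}\le\varrho$ keeps $v$ bounded) for the equality case.
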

\begin{proof}
Using Lemma \ref{lem:brow}, the proof follows closely the proof in \cite{astak}, see also \cite[p.\ 104]{hs}.
\end{proof}
Thus, if $\delta^2=2a$, i.e. $e^{bt}e^{-r_t}$ is a martingale (confer Lemma \ref{intro:convex} and Definition \eqref{mef}), the optimisation problem can be reduced to the classical dividend optimisation problem with a constant discounting rate, described in \cite{astak}. 
\subsection{Conclusion}
For the deterministic income, we considered different cases dependent on the relation between the parameters $\delta^2$ and $a$. In both cases, the optimal strategy turns out to be of a barrier type, i.e. it is optimal to spend all available money only if the process $\{r_t\}$ is below a certain level, otherwise it is optimal to wait.
\\If the volatility coefficient $\delta$ is relatively small, i.e. $\delta^2\le 2a$, then the paths are going ``nearly deterministically'' to infinity, meaning that $e^{-r_t}$ is a supermartingale. Therefore, the optimal barrier is lying at infinity, and it is always optimal to spend the maximal possible amount.
\\If $2a<\delta^2$, the process $\{r_t\}$, moving $\delta^2$ from $2a$ upwards shifts the optimal barrier from $\infty$ to $0$, see Figure \ref{fig2}. It means, the higher the volatility the likely the process can hit a lower level. It makes sense to wait until the discounting process attains ``small'' values, and spend the saved amount there.
\\Finally, we showed that the strategy ``spendings only if $r_t=0$'' is never optimal, i.e. the optimal barrier is always greater than zero. 
\medskip
\\Note that the above results strongly differ from the case of integrated Ornstein-Uhlenbeck (OU) discounting. There, see \cite{eis1}, it was optimal to wait if the interest rate was below a certain level and to start consuming otherwise. The reason for swapping of the paying behaviour in the case of an CIR discounting roots in the fact that an OU process can attain negative values. For more details confer \cite{eis1}.

In the Brownian risk model, the case $2a\neq \delta^2$ has not been considered and is a subject to future research. We conjecture that the optimal strategy there will be of a barrier type with a non-constant barrier depending on the underlying CIR process.  
\section{Appendix\label{app}}
\small
\subsubsection*{Proof of Lemma \ref{app:lem1}} 
Assume there exists a set $A\in\mF$ with $\mP[A]>0$ and $\liminf\limits_{t\to\infty} r_t=B<\infty$ on $A$. Then, there is a sequence $t_n\to\infty $ as $n\to\infty$ such that $\lim\limits_{n\to\infty} r_{t_n}=B$ on $A$. 
By Lebesgue's dominated convergence theorem and using $\lim\limits_{t\to\infty}\mE[e^{-r_t}]=\lim\limits_{t\to\infty}M(r,t)=0$, confer \eqref{mef} for definition of $M$, we obtain
\[
0=\lim\limits_{n\to\infty}\mE_r[e^{-r_{t_n}}]\ge \lim\limits_{n\to\infty}\mE[e^{-r_{t_n}}\one_{A}]=e^{-B}\mP[A]>0.
\]
The last inequality is a contradiction proving our claim.
\hfill $\square$
\subsubsection*{Proof of Lemma \ref{intro:lem1}}
\noindent
\textbf{Part I:}
\\Due to \cite[p.\ 127]{walter}, the differential equation 
\[
e^{-r}+(ar+b)g'(r)+\frac{\delta^2r}2 g''(r)=0
\]
has twice continuously differentiable solutions on $[0,r^*]$. 
A general solution to the above differential equation is given by
\begin{align*}
g'(r)=\Big( \frac {-2}{{\delta}^{2}}\int \!{{y}^{{\frac {-{\delta}^{2}+
2\,b}{{\delta}^{2}}}}{ e^{ \left( {\frac {2a}{{\delta}^{2}}}-1
 \right) y}}}\,{\md}y+C \Big) { e^{-\frac {2a}{{
\delta}^{2}}r}}{r}^{-{\frac {2b}{{\delta}^{2}}}}\;.
\end{align*}
Therefore, in order to have $g'(0)>-\infty$ we must define
\[
g'(r)=\Big(-\frac2{\delta^2}\int_0^r y^{\big(\frac{2b}{\delta^2}-1\big)}e^{\big(\frac{2a}{\delta^2}-1\big)y}\md y\Big) r^{-\frac{2b}{\delta^2}}e^{-\frac{2a}{\delta^2}r}\;.
\]
Now, letting $r\to 0$ and using L'Hospital's rule:
\[
\lim\limits_{r\to 0} g'(r)=-\frac 1b\;.
\]
Let $\tilde\psi_1(r)$ denote the unique solution with boundary conditions $\tilde\psi_1(r^*)=0$ and $\tilde\psi_1'(0)=-\frac 1b$. In this case it holds
\[
\lim\limits_{r\to\infty}r\tilde\psi_1''(r)=0\;,
\]
which means $\tilde\psi_1''(r)\in o(\frac 1r)$ for $r\to 0$. Thus, we can apply Ito's formula on $\tilde\psi_1(r_{\tau\w t})$:
\begin{align*}
\tilde\psi_1(r_{\tau\w t})=\tilde\psi_1(r)+\int_0^{\tau\w t}(ar_s+b)\tilde\psi_1'(r_s)+\frac{\delta^2r_s}2\tilde\psi_1''(r_s)\md s +\int_0^{\tau\w t} \delta\sqrt{r_s}\tilde\psi_1'(r_s) \md W_s\;.
\end{align*}
Since $\tilde\psi_1'$ is bounded, 
%fulfils $\int_0^t \mE[\sqrt{r_s}\tilde\phi_1'(r_s)]\md s<\infty$, \cite[p.\ 130\ Corollary 1.25]{revuzyor}.
the stochastic integral is a martingale with expectation zero. Therefore, taking the expectations on the both sides and letting $t\to\infty$ (interchanging of expectations and limit is possible due to the bounded convergence theorem) we obtain
\begin{align*}
\tilde\psi_1(r)=\mE_r\Big[\int_0^{\tau}e^{-r_s}\md s\Big]=\psi_1(r)\;.
\end{align*}
\textbf{Part II:}
\\It is clear that if $2b\ge\delta^2$ and $r^*=0$, we have $\phi_1(r)=\one_{\{0\}}$. 
Therefore, we just need to consider the remaining cases. Differential equation \eqref{eq1} has a unique solution on $(r^*,\infty)$, say $\tilde\phi_1(r)$, with boundary conditions $\tilde\phi_1(r^*)=1$ and $\tilde\phi_1(\infty)=0$:
\[
\tilde \phi_1(r)=\frac1{\int_{r^*}^\infty y^{-\frac{2b}{\delta^2}}e^{-\frac{2a}{\delta^2}y}\md y}\int_{r}^\infty y^{-\frac{2b}{\delta^2}}e^{-\frac{2a}{\delta^2}y}\md y\;.
\] 
Applying Ito's formula on $\tilde\phi_1$ yields
\begin{align}
\tilde\phi_1(r_{\rho\w t})=\tilde\phi_1(r)+\int_0^{\rho\w t}(ar+b)\tilde\phi_1'(r_s)+\frac{\delta^2r}2\tilde\phi_1''(r_s)\md s+\int_0^{\rho\w t}\delta\sqrt{r_s}\tilde\phi_1'(r_s)\md W_s\;.\label{app:phi1}
\end{align} 
If $r^*>0$, then $\sqrt{r_{\rho\w s}}\tilde \phi'_1(r_{\rho\w s})$ is bounded and the stochastic integral is a martingale with expectation zero. 
\\If $r^*=0$ and $2b<\delta^2$ then:
\[
\Big(\sqrt{r_s}\tilde\phi_1'(r_s)\Big)^2=r_s^{1-\frac{4b}{\delta^2}}e^{-\frac{4a}{\delta^2}r_s}\frac1{\Big(\int_{0}^\infty y^{-\frac{2b}{\delta^2}}e^{-\frac{2a}{\delta^2}y}\md y\Big)^2}. 
\] 
Note that $\int_{0}^\infty y^{-\frac{2b}{\delta^2}}e^{-\frac{2a}{\delta^2}y}\md y<\infty$ and
\[
\int_0^t\mE\Big[\Big(\sqrt{r_s}\tilde\phi_1'(r_s)\Big)^2\Big]\md s<\infty
\]
for all $t\in\R_+$ due to Lemma \ref{lem:endl}. Then due to \cite[p.\ 130\ Corollary 1.25]{revuzyor}, the stochastic integral in \eqref{app:phi1} is a martingale with expectation zero.
Applying the expectations and letting $t$ go to infinity in \eqref{app:phi1}, one obtains 
\[
\tilde\phi_1(r)=\mE\Big[\one_{[\rho<\infty]}\Big]=\phi_1(r)\;.
\]
\textbf{Part III:}
If $r^*=0$ and $2b\ge \delta^2$ then obviously $\phi_2(r)\equiv 0$. Consider now the remaining cases.
Differential equation \eqref{eq2} has a unique solution
\begin{align*}
\tilde\phi_2(r)&=\frac 2{\delta^2}\frac{\int_{r^*}^\infty y^{-\frac{2b}{\delta^2}}e^{-\frac{2a}{\delta^2}y}\int_{r^*}^y\phi_1(z)z^{\frac{2b}{\delta^2}-1}e^{\frac{2a}{\delta^2}z}\md z\md y}{\int_{r^*}^\infty y^{-\frac{2b}{\delta^2}}e^{-\frac{2a}{\delta^2}y}\md y}\int_{r^*}^r y^{-\frac{2b}{\delta^2}}e^{-\frac{2a}{\delta^2}y}\md y
\\&\quad{}-\frac 2{\delta^2}\int_{r^*}^r y^{-\frac{2b}{\delta^2}}e^{-\frac{2a}{\delta^2}y}\int_{r^*}^y\phi_1(z)z^{\frac{2b}{\delta^2}-1}e^{\frac{2a}{\delta^2}z}\md z\md y
\end{align*}
with boundary conditions $\tilde\phi_2(r^*)=0=\tilde\phi_2(\infty)$. Note that for $r^*>0$ it holds due to the structure of $\phi_1$ given above:
\begin{align*}
\int_{r^*}^\infty y^{-\frac{2b}{\delta^2}}e^{-\frac{2a}{\delta^2}y}\int_{r^*}^y\phi_1(z)z^{\frac{2b}{\delta^2}-1}e^{\frac{2a}{\delta^2}z}\md z\md y\le \frac 1{r^*}\phi_1(r^*)<\infty\;.
\end{align*}
Let now $r^*=0$ and $2b<\delta^2$. Then, applying partial integration for the inner integral and using that the negative part is smaller than zero, we get
\begin{align*}
\int_{0}^\infty y^{-\frac{2b}{\delta^2}}e^{-\frac{2a}{\delta^2}y}\int_{0}^y\phi_1(z)z^{\frac{2b}{\delta^2}-1}e^{\frac{2a}{\delta^2}z}\md z\md y\le \frac{\delta^2}{2b}\int_0^\infty y^{-\frac{2b}{\delta^2}}e^{-\frac{2a}{\delta^2}}\Big[y^{\frac{2b}{\delta^2}}e^{\frac{2a}{\delta^2}}\phi_1(y)+y\Big]<\infty\;.
\end{align*}
The finiteness of the integral above follows from the properties of the Gamma distribution. Further, it is easy to see that $\tilde\phi_2'(r^*)>0$. Since $\tilde\phi_2$ solves the differential equation \eqref{eq2}, it follows immediately $\tilde\phi_2''(r)<0$ if $\tilde \phi_2'(r)=0$. This means in particular that after becoming negative, the derivative $\tilde\phi_2'(r)$ remains negative. Therefore $\tilde\phi_2(\infty)=0$ implies $\tilde\phi_2(r)\ge 0$.\smallskip
\\For the function $\tilde\phi_2(r)$ it holds
\begin{align*}
\tilde\phi_2(r_{\rho\w t})=\tilde\phi_2(r)+ \int_0^{\rho\w t}(ar_s+b)\tilde\phi_2'(r_s)+\frac{\delta^2r_s}2\tilde\phi_2''(r_s)\md s +\int_0^{\rho\w t}\delta\sqrt{r_s}\tilde\phi_2'(r_s)\md W_s\;.
\end{align*}
Similar to Part II, using Lemma \ref{lem:endl} and \cite[p.\ 130\ Corollary 1.25]{revuzyor} one can show that the stochastic integral above is a martingale with expectation zero.
Applying the expectations yields
\begin{align*}
\mE\Big[\tilde\phi_2(r_{\rho\w t})\Big]=\tilde\phi_2(r)-\mE\Big[\int_0^{\rho\w t}\phi_1(r_s)\md s\Big]\;.
\end{align*}
Note that applying Fubini's theorem on the expectation on the rhs, one obtains
\begin{align*}
\mE\Big[\int_0^{\rho\w t}\phi_1(r_s)\md s\Big]&=\int_0^\infty \mE\Big[\one_{[0\le s\le\rho\w t]}\phi_1(r_s)\Big]\md s
=\int_0^\infty \mE\Big[\one_{[0\le s\le\rho\w t]}\mP[\rho<\infty|r_s]\Big] \md s
\\&=\int_0^\infty \mE\Big[\one_{[0\le s\le\rho\w t]}\mE[\one_{[\rho<\infty]}|\mF_s]\Big]\md s
\\&=\int_0^\infty \mE\Big[\mE\big[\one_{[0\le s\le\rho\w t]}\one_{[\rho<\infty]}|\mF_s\big]\Big]\md s
\\&=\mE\Big[\int_0^{\rho\w t}\one_{[\rho<\infty]}\md s\Big]
\\&=\mE\Big[\one_{[\rho<\infty]}\rho\w t\Big]\;.
\end{align*}
Letting $t\to\infty$, yields 
\[
\tilde\phi_2(r)=\mE\Big[\one_{[\rho<\infty]}\rho\Big]=\phi_2(r)\;.
\]
Note that $\psi'_1(r)<0$ and $\phi_2'(r^*)\ge 0$. 
\\Due to $\phi_2(r^*)=0$ and $\phi_2(r)\ge 0$ it must hold $\phi_2'(r^*)\ge 0$. On the other hand, if $\tilde r:=\inf\{r\ge 0:\; \psi_1'(r)\ge 0\}\le r^*$ it must hold $\psi_1''(\tilde r)<0$ in order to ensure that $\psi_1$ solves the differential equation \eqref{eq3}. Since, it is a contradiction we can conclude $\psi_1'(r^*)<0$ for all $r\in[0,r^*]$. \hfill $\square$
%%%%%%%%%%%%%%%%%%%%%%%%%%%%%%%%%%%%%%%%%%%%%%%%%%%%%%%%%%%%%%%%%%%%%%%%%%%%%%%%%%%%
\subsubsection*{Proof of Lemma \ref{lem:phi1}}
First note that it obviously holds $\phi_1'<0$.
We can solve the differential equation \eqref{eq1} explicitly and obtain
\begin{equation}
\phi_1'(r)=-\frac{r^{-\frac{2b}{\delta^2}}e^{-\frac{2a}{\delta^2} r}}{\int_{ r^*}^\infty y^{-\frac{2b}{\delta^2}}e^{-\frac{2a}{\delta^2} y}\md y}\;.\label{phi1}
\end{equation}
Consider now 
\[
\frac{\phi_1'(r)}{\phi_1(r)}=-\frac{r^{-\frac{2b}{\delta^2}}e^{-\frac{2a}{\delta^2} r}}{\int_{r}^\infty y^{-\frac{2b}{\delta^2}}e^{-\frac{2a}{\delta^2} y}\md y}\;.
\]
Deriving $\frac{\phi_1'(r)}{\phi_1(r)}$ with respect to $r$ yields $\Big(\frac{\phi_1'(r)}{\phi_1(r)}\Big)'=-\frac{\phi_1'(r)}{\phi_1(r)}\Big\{-\frac{\phi_1''(r)}{\phi_1'(r)}+\frac{\phi_1'(r)}{\phi_1(r)}\Big\}$.
Using \eqref{phi1} we obtain
\[
\frac{\phi_1''(r)}{\phi_1'(r)}=-\frac{2a}{\delta^2}-\frac{2b}{\delta^2 r}\;,
\]
Let for simplicity $h(r):=\frac{\phi_1'(r)}{\phi_1(r)}$. Then
\begin{align*}
h'(r)&=-h(r)\Big\{\frac{2a}{\delta^2}+\frac{2b}{\delta^2 r}+h(r)\Big\}
=-h(r)\Big\{1+h(r)\Big\}-h(r)\Big\{\frac{2a}{\delta^2}+\frac{2b}{\delta^2 r}-1\Big\}\;.
\end{align*}
Note that $-h(r)>0$ and $\frac{2a}{\delta^2}+\frac{2b}{\delta^2 r}-1<0$ for $r>R$. That is, if for some $\hat r>R$ it holds $h(\hat r)=-1$ then $h'(\hat r)<0$, meaning that $h(\hat r)<-1$ for all $r>\hat r$. But this is a contradiction to
\begin{align*}
\lim\limits_{r\to\infty} h(r)=\lim\limits_{r\to\infty}\frac{\phi_1''(r)}{\phi_1'(r)}=-\frac{2a}{\delta^2}>-1\;.
\end{align*}
Therefore, it must hold $h(r)> -1$ for $r>R$. 
%\\Consider now the interval $[0,r^*]$. 
\\Because $\lim\limits_{r\to 0} h(r)=-\infty$, by the intermediate value theorem there must be an $r^*\in (0,R]$ such that $h(r^*)=-1$. \hfill $\square$
%%%%%%%%%%%%%%%%%%%%%%%%%%%%%%%%%%%%%%%%%%%%%%%%%%%%%%%%%%%%%%%%%%%%%%%%%%%%%%%%%%%%%%%%%%%%%%%%%%%%%%%%%%%%%%%%%%%%%%
\subsubsection*{Proof of Lemma \ref{sec0:lastexit}}
Note that $\lambda_y$ is not a stopping time, because $\{\lambda_y \le t\}\notin \mF_t$. 
\\Further, we know from lemma \ref{app:lem1} and Borodin \& Salminen \cite[p.\ 27]{bs} that $\lambda_y <\infty$ a.s. with
\[
\mP_r[0<\lambda_y\le t]=\int_0^t \frac{p(u;r,y)}{G_0(y,y)}\md u
\]
where $p(t;r,y)$ is the transition density of $\{r_t\}$ with respect to the speed measure $m$ with density $m'$ of $\{r_t\}$ (for the exact formula for $m$ and $m'$ confer \cite[p.\ 366]{ek} formula (1.4); the differential equation for $m'$ can be found in \cite[p.\ 18]{bs}) and $G_\alpha(r,y)$ is the Green function with 
\[
G_0(y,y)=\int_0^\infty p(t;y,y)\md t\;.
\]
Let $g(t;r,y)$ be the density of $\{r_t\}$ with respect to the Lebesgue measure. Then
\begin{align*}
g(t;y,y)=p(t;y,y)m'(y)\;.
\end{align*}
Therefore, using the above formula for $G_0(y,y)$:
\begin{align*}
\mP_y[0<\lambda_y\le t]=\int_0^t \frac{g(u;y,y)}{\int_0^\infty g(z;y,y)\md z}\md u\;.
\end{align*}
According to \eqref{density} the density $g(t;y,y)$ is given by
\begin{align*}
&g(t;y,y)=c(t)e^{-u(t,y)-v(t,y)}\Big(\frac{v(t,y)}{u(t,y)}\Big)^{q/2}I_{q}(2\sqrt{u(t,y)v(t,y)}),
%\\&c(t)=\frac{2a}{(e^{at}-1)\delta^2}
%\\&u(t,y)= \frac{2a}{(e^{at}-1)\delta^2} ye^{at},
%\\&v(t,y)=\frac{2a}{(e^{at}-1)\delta^2}y,
%\\&q=\frac{2b}{\delta^2}-1<0
\end{align*}
and $I_{q}$ is modified Bessel function of the first kind of order $q$. 
Using this explicit representation and $I_{q}(2c(t)ye^{at/2})=(c(t)e^{at/2})^qy^q\s_{m=0}^\infty \frac{(c(t)ye^{at/2})^{2m}}{m!\Gamma(m+q+1)}$, we obtain 
\begin{align*}
\frac{g(t;y,y)}{\int_0^\infty g(z;y,y)\md z}&=\frac{c(t)e^{-c(t)y(e^{at}+1)}e^{-aqt/2}I_{q}\big(2c(t)ye^{at/2}\big)}{\int_0^\infty c(z)e^{-c(z)y(e^{az}+1)}e^{-aqz/2}I_{q}\big(2c(z)ye^{az/2}\big)\md z}
%\\&=\frac{c(t)e^{-c(t)y(e^{at}+1)}e^{-aqt/2}(c(t)e^{at/2})^qy^q\s_{m=0}^\infty \frac{(c(t)ye^{at/2})^{2m}}{m!\Gamma(m+q+1)}}{\int_0^\infty c(z)e^{-c(z)y(e^{az}+1)}e^{-aqz/2}(c(z)e^{az/2})^qy^q \s_{m=0}^\infty \frac{(c(z)ye^{az/2})^{2m}}{m!\Gamma(m+q+1)}\md z}
\\&=\frac{c(t)e^{-c(t)y(e^{at}+1)}e^{-aqt/2}(c(t)e^{at/2})^q\s_{m=0}^\infty \frac{(c(t)ye^{at/2})^{2m}}{m!\Gamma(m+q+1)}}{\int_0^\infty c(z)e^{-c(z)y(e^{az}+1)}e^{-aqz/2}(c(z)e^{az/2})^q \s_{m=0}^\infty \frac{(c(z)ye^{az/2})^{2m}}{m!\Gamma(m+q+1)}\md z}
\end{align*}
By bounded convergence theorem, we can let $y$ go to zero and obtain
\begin{align*}
\frac{g(t;0,0)}{\int_0^\infty g(z;0,0)\md z}&=\frac{c(t)e^{-aqt/2}(c(t)e^{at/2})^q}{\int_0^\infty c(z)e^{-aqz/2}(c(z)e^{az/2})^q \md z}
= \frac{c(t)^{q+1}}{\int_0^\infty c(z)^{q+1} \md z}\;.
%\\&=\frac{c(t)^{q+1}{\int_0^\infty e^{aqz}c(z)^{q+1} \md z}\;.
\end{align*}
Note that indeed it holds by partial integration and using $-1<q<0$:
\begin{align*}
\Big(\frac{2a}{\delta^2}\Big)^{-q-1}\int_0^\infty c(z)^{q+1} \md z&=\int_0^\infty \big(e^{az}-1\big)^{-q-1}\md z
\\&=\frac 1{-aq}\int_0^\infty \big(e^{az}-1\big)^{-q}e^{-az}\md z
\\&=\frac 1{-aq}\int_0^\infty e^{aqz}\big(e^{az}-1\big)^{-q} e^{-(1+q)az}\md z
\\&\le\frac 1{-aq}\int_0^\infty e^{-(1+q)az}\md z = \frac{1}{-qa^2(1+q)}<\infty\;.
%
%\Big\{ \int_0^\infty e^{-az}e^{az(q+1)}\Big(\frac{2a}{\delta^2}\Big)^{q+1}\big(e^{az}-1\big)^{-q-1}\md z
%\\&\le  \int_0^\infty e^{-az}\Big(\frac{2a}{\delta^2}\Big)^{q+1}\md z\;.
%%&=-\Big(\frac{2a}{\delta^2}\Big)^{q+1}\frac{ \big(e^{ax}-1\big)^{-q}e^{aqx}}{qa}\Big|_0^\infty=\Big(\frac{2a}{\delta^2}\Big)^{q+1}\;.
\end{align*}
The last inequality follows because $e^{aqz}\big(e^{az}-1\big)^{-q}\le 1$ and $-1-q<0$. With similar arguments one obtains
\begin{align*}
\mE_0[\lambda_0]=\int_0^\infty t \frac{g(t;0,0)}{\int_0^\infty g(z;0,0)\md z}\md t= \int_0^\infty t \frac{\big(e^{at}-1\big)^{-q-1}}{\int_0^\infty \big(e^{az}-1\big)^{-q-1}\md z}\md t<\infty\;.
\end{align*}
\hfill $\square$
\section*{Acknowledgments}
\noindent
The research of the first author was funded by the Austrian Science Fund (FWF), Project number V 603-N35. Also, the first author would like to thank the University of Liverpool for support and cooperation.

\end{document}